\theoremstyle{plain}
\newtheorem{theorem}{Theorem}[section]
\newtheorem*{theorem*}{Theorem}
\newtheorem{lemma}[theorem]{Lemma}
\newtheorem{proposition}[theorem]{Proposition}
\theoremstyle{definition}
\theoremstyle{remark}
\newtheorem{remark}[theorem]{Remark}
\newtheorem{remark*}[theorem]{Remark\textup{*}}
\numberwithin{equation}{section}
\DeclareMathAlphabet{\mathpzc}{OT1}{pzc}{m}{it}
\def\eps{\varepsilon}
\def\C {\mathbb{C}}
\def\N {\mathbb{N}}
\def\R {\mathbb{R}}
\def\Z {\mathbb{Z}}
\def\eD {\EuScript{D}}
\def\eN {\EuScript{N}}
\newcommand\1{{\ensuremath {\mathds 1} }}
\newcommand{\bA}{\mathbf{A}}
\newcommand{\bX}{\mathbf{X}}
\newcommand{\br}{\mathbf{r}}
\newcommand{\bx}{\mathbf{x}}
\newcommand{\by}{\mathbf{y}}
\newcommand{\cH}{\mathcal{H}}
\newcommand{\cS}{\mathcal{S}}
\newcommand{\hT}{\hat{T}}
\newcommand{\sD}{\textup{D}}
\newcommand{\sx}{\textup{x}}
\DeclareMathOperator{\infspec}{\mathrm{inf\, spec\,}}
\DeclareMathOperator{\curl}{\mathrm{curl}}
\DeclareMathOperator{\dom}{\mathrm{dom}}
\newcommand{\slot}{\cdot}
\newcommand{\domD}{\mathscr{D}}
\newcommand{\dist}{\mathrm{dist}}
\newcommand{\sym}{\mathrm{sym}}
\newcommand{\asym}{\mathrm{asym}}
\newcommand{\loc}{\mathrm{loc}}
\title{Fermionic Behavior of Ideal Anyons}
\author[D. Lundholm]{Douglas LUNDHOLM}
\address{Department of Mathematics, KTH Royal Institute of Technology, SE-100 44 Stockholm, Sweden}
\email{dogge@math.kth.se}
\author[R. Seiringer]{Robert SEIRINGER}
\address{IST Austria, Am Campus 1, 3400 Klosterneuburg, Austria}
\email{robert.seiringer@ist.ac.at}
\subjclass[2010]{81V70, 81Q10, 35P15, 46N50}
\begin{document}

\begin{abstract}
We prove upper and lower bounds on the ground-state energy of the
ideal two-dimensional anyon gas. Our bounds 
are extensive in the particle number, as for fermions, and 
linear in the statistics parameter $\alpha$.
The lower bounds extend to Lieb--Thirring inequalities for all anyons
except bosons.
\end{abstract}

\maketitle

\section{Introduction}\label{sec:intro}

	The behavior of quantum mechanical systems of particles depends
	sensitively on the geometry of the space in which the particles may move.
	In particular, 
	dimensionality plays a significant role, and it is a geometric fact that only 
	two fundamental types of identical particles naturally occur in three-dimensional space
	--- bosons and fermions, from whose basic statistical properties
	many collective quantum phenomena follow. More exotic possibilities of quantum statistics may be realized 
	by confining the particles' motion and thereby
	effectively lowering the dimensionality. 
	In two spatial dimensions, which we will be concerned with here, the richer topology allows for 
	a family of hypothetical quantum particles known as {\em anyons}.  
	
	Recall that the state of a quantum system of $N$ particles
	is described in terms of a Schr\"odinger wave function,
	$\Psi\colon (\R^2)^N \to \C$,
	whose amplitude $|\Psi(\sx)|^2$ 
	represents the probability density of finding the particles at positions
	$\sx = (\bx_1,\ldots,\bx_N)$, $\bx_j \in \R^2$. 
	If the particles are indistinguishable, one must impose that the
	density is symmetric under particle exchange, i.e., 
	\begin{equation}\label{eq:exchange-amp}
		|\Psi(\bx_1, \ldots, \bx_j, \ldots, \bx_k, \ldots, \bx_N)|^2
		= |\Psi(\bx_1, \ldots, \bx_k, \ldots, \bx_j, \ldots, \bx_N)|^2, 
		\quad j \neq k.
	\end{equation}
	This leaves the possibility for an exchange phase:
	\begin{equation}\label{eq:exchange-phase}
		\Psi(\bx_1, \ldots, \bx_j, \ldots, \bx_k, \ldots, \bx_N)
		= e^{i\alpha\pi} \Psi(\bx_1, \ldots, \bx_k, \ldots, \bx_j, \ldots, \bx_N), 
		\quad j \neq k.
	\end{equation}
	In the case of bosons ($\alpha=0$) or fermions ($\alpha=1$), one has $e^{i\alpha\pi} = \pm 1$, 
	so that a double exchange is trivial. However, by clarifying in topological terms 
	what exactly should be meant by 
	the exchange \eqref{eq:exchange-phase} (say a simple counterclockwise continuous exchange of two particles),
	it is possible to allow for \emph{any} phase 
	$e^{i\alpha\pi} \in U(1)$ or statistics parameter $\alpha \in \R$, 
	thereby defining a system of 
	\emph{any}ons\footnote{More precisely, these are \emph{abelian} anyons.
	Non-abelian anyons may be defined by replacing 
	complex phase factors by unitary matrices \cite{Froehlich-90,Nayak-etal-08}.}.
	Such possibilities have been known since the 1970s
	and have been studied extensively in the physics literature during the following decades, 
	with notable proposals 
	for concrete realizations and applications, such as 
	for quasi-particles in the fractional quantum Hall effect,
	rotating cold quantum gases,
	as well as for future prospects of quantum information storage and computation.
	We refer to 
	\cite{DatMurVat-03, Forte-92, Froehlich-90, IenLec-92, Khare-05, Lerda-92, LunRou-16, Myrheim-99, Nayak-etal-08, Ouvry-07, Stern-08, Wilczek-90} 
	for reviews.

	Mathematically, anyons can be realized by viewing $\Psi$ as a multi-valued 
	function or a section of a complex line bundle over 
	a nontrivial configuration manifold,
	an approach known in the literature as the anyon gauge picture \cite{DelFigTet-97}. 
	Alternatively, one can start with the usual quantum-mechanics setup, taking the familiar bosons or fermions as a  reference system,  
	and adding to these magnetic interactions of Aharonov--Bohm type \cite{LunSol-13a,LunSol-14,LunRou-15}. 
	Here we shall follow this latter approach, known as the magnetic gauge picture. 
		
	Many basic questions concerning the behavior of many-particle systems of anyons  have remained open
	since their discovery. 
	This is true even for ideal anyons, i.e., particles without any interactions in addition to the ones forced by statistics. 
	While non-interacting bosons and fermions admit a description solely
	in terms of the spectrum and eigenstates of the corresponding one-body 
	problem, allowing for the properties of the ideal quantum Bose and Fermi gases
	to be worked out easily,
	anyons with $0 < \alpha < 1$ do not admit such a simplification
	and must be treated within the full many-body context.
	Even their  ground-state properties are thus difficult to determine. In contrast, 
	recall that ideal bosons at zero temperature display complete Bose-Einstein condensation into a single 
	one-body state of lowest energy, while
	fermions are distributed over the $N$ lowest one-body states to satisfy the Pauli exclusion principle, leading in particular  
	to the extensivity of the fermionic ground-state energy.
	
	We show in this work that the ground-state energy of the ideal anyon gas 
	has a similar extensivity as the one for fermions, for all values of $\alpha$ except for zero (i.e., bosons).
	In fact, we shall derive upper and lower bounds that interpolate linearly 
	in $\alpha$ between bosons at $\alpha=0$ and fermions at $\alpha=1$.
	This improves on previous results which only applied to particular rational values of $\alpha$.
	Via well-known methods, our new bounds imply that also the celebrated Lieb--Thirring inequality holds for all anyons except for bosons. 

\section{Model and main results}\label{sec:intro-model}

	In the magnetic gauge formulation, 
	the kinetic energy operator for $N$ ideal (i.e., point-like) anyons in $\R^2$
	with statistics parameter $\alpha \in \R$
	is given by\footnote{We choose units such that $\hbar=1=2m$, with $m$ the particle mass.} 
	$$
		\hT_\alpha := \sum_{j=1}^N D_j^2,
	$$
	with the magnetically coupled momenta
	$$
		D_j := -i\nabla_{\bx_j} + \alpha\bA_j,
		\qquad
		\bA_j := \sum_{\substack{k=1 \\ k \neq j}}^N (\bx_j-\bx_k)^{-\perp}, 
	$$
	where
	$$
		\bx^{-\perp} := \frac{\bx^\perp}{|\bx|^2} = \frac{(-y,x)}{x^2+y^2} \qquad \text{for $\bx=(x,y)\in \R^2$}\, ,
	$$
	is the magnetic potential of an Aharonov-Bohm flux of magnitude $2\pi$ 
	at the origin, satisfying  $\curl \bx^{-\perp} = 2\pi\delta_0(\bx)$. 
	Since we demand that $\alpha=0$ represents bosons in accordance with~\eqref{eq:exchange-phase}, 
	we take the $N$-particle Hilbert space to be
	$\cH = L^2_\sym(\R^{2N})$, the permutation-symmetric square-integrable functions.
	The operators $\sD_\alpha = (D_j)_{j=1}^N$ and $\hT_\alpha$ 
	then act as unbounded operators on $\cH$ and,
	because of the singular nature of the vector potentials $\bA_j \notin L^2_\loc$,
	some care is needed to properly define their domains. 
	One can in fact show \cite[Theorem~5]{LunSol-14}
	that on $\R^{2N}$ the minimal and maximal realizations of $\sD_\alpha$
	coincide and hence induce a natural form domain 
	$\domD^N_\alpha = \dom(\sD_\alpha) \subset \cH$ 
	for the kinetic energy $\hT_\alpha$. 
	This choice 
	is then taken to model  ideal anyons. Indeed
	$\alpha=0$ yields free bosons, while $\alpha=1$ 
	corresponds to fermions,
	with their domains being the Sobolev spaces
	$\domD^N_0 = H^1_\sym$,  $\dom(\hT_0) = H^2_{\sym}$ 
	and $\domD^N_1 = U^{-1}H^1_\asym$, $\dom(\hT_1) = U^{-1}H^2_{\asym}$, respectively. 
	Here, the unitary map $U\colon L^2_{\sym/\asym} \to L^2_{\asym/\sym}$, 
	$$ 
		 (U\Psi)(\bx_1,\ldots,\bx_N) := \prod_{1 \le j<k \le N} \frac{z_j-z_k}{|z_j-z_k|} \Psi(\bx_1,\ldots,\bx_N),
		\quad z_j := x_j + iy_j,
	$$
	transforms bosons with attached unit magnetic flux
	into free fermions, and vice versa.
	
	In general, the gauge equivalence
	\begin{equation} \label{eq:gauge-equivalence}
		\sD_{\alpha+2n} = U^{-2n} \sD_{\alpha} U^{2n}, 
		\qquad
		\domD^N_{\alpha+2n} = U^{-2n} \domD^N_{\alpha}, 
		\qquad n \in \Z,
	\end{equation}
	with $U^{2n}\colon \cH \to \cH$,
	implies that the entire spectrum of $\hT_\alpha$ is $2\Z$-periodic in $\alpha$.
	It is also symmetric under the reflection $\alpha \mapsto -\alpha$, by complex conjugation $\Psi \mapsto \bar\Psi$.
	Note that these properties are all in line with the periodicity 
	of the exchange phase \eqref{eq:exchange-phase}. 
	In particular, it suffices to consider the case $0\leq \alpha\leq 1$ only, 
	which we will do from now on.
	
	When restricting to finite domains $\Omega \subset \R^2$ 
	the operator $\hT_\alpha$ and its spectrum  depends
	on the choice of boundary conditions.
	We may naturally define a Neumann realization 
	via the nonnegative quadratic form
	$$
		\langle\Psi,\hT_\alpha^{\Omega,\eN} \Psi\rangle 
		= \sum_{j=1}^N \int_{\Omega^N} |D_j \Psi|^2,
		\qquad \Psi \in \domD^N_\alpha,
	$$
	and a Dirichlet realization 
	$\hT_\alpha^{\Omega,\eD}$
	by considering the same form for $\Psi \in \domD^N_\alpha$ with compact support in $\Omega$.
	In particular, let us define the Neumann ground-state energy for $N$ 
	anyons on a domain $\Omega\subset \R^2$ as
	$$
		E_N^\eN(\alpha;\Omega) := \infspec \hT_\alpha^{\Omega,\eN}
		= 	\inf \left\{
			\sum_{j=1}^N \int_{\Omega^N} |D_j \Psi|^2
			\ : \  \Psi \in \domD^N_\alpha, \ \int_{\Omega^N} |\Psi|^2 = 1
			\right\}
	$$
	and likewise for the Dirichlet ground state energy 
	$E_N^\eD(\alpha;\Omega) = \infspec  \hT_\alpha^{\Omega,\eD}$. 

	For the special case of $\Omega$ equal to the unit square $Q_0 = [0,1]^2$, 
	we will drop $\Omega$ in the notation for simplicity, 
	and simply write $E_N^\eN(\alpha)$ and $E_N^\eD(\alpha)$, respectively. 
	Note that for a general square 
	$Q \subset \R^2$ we have
	\begin{equation}\label{eq:scaling}
		E^{\eN/\eD}_{N}(\alpha;Q) = |Q|^{-1} E^{\eN/\eD}_N(\alpha),
	\end{equation}
	due to the homogeneous scaling property of $\sD_\alpha$. 
	In particular, in the thermodynamic limit $N\to \infty$, $|Q|\to \infty$ 
	with the density $\rho=N/|Q|$ of the gas kept fixed, 
	the energy per particle is equal to $\rho$ times an $\alpha$-dependent 
	constant, given by $\lim_{N\to \infty} N^{-2} E^{\eN/\eD}_N(\alpha)$. 

	The case $\alpha=1$ corresponds to ideal fermions, where the ground state energy is obtained by simply adding up the $N$ lowest eigenvalues of the 
	one-body operator, i.e.,~the Laplacian $-\Delta_{Q_0}^{\eN/\eD}$. From the Weyl asymptotics one obtains 
	\begin{equation}\label{eq:Weyl-asymp}
		E^{\eN/\eD}_N(1) = 2\pi N^2 + o(N^2) \quad \text{as} \quad N \to \infty.
	\end{equation}
	On the other hand, for ideal bosons, i.e., $\alpha=0$, 
	$$
		E^{\eN}_N(0)=0 \qquad \text{and} \qquad E^{\eD}_N(0) = 2 \pi^2 N \,,
	$$
	which equals $N$ times the infimum of the spectrum of the Laplacian $-\Delta_{Q_0}^{\eN/\eD}$. 
	In the case $0<\alpha < 1$ of proper anyons there
	is no simplification to a one-body problem, however, and the system must
	be treated as a fully interacting many-body system. 
		
	Our main result is to show that for $0<\alpha<1$, 
	$E^{\eN/\eD}_N(\alpha) \sim N^2$, as in the fermionic case, 
	with a prefactor that is of order  $\alpha$ 
	both in the upper and lower bounds.
	In this sense, the ideal anyon gas behaves fermionic, for any $\alpha > 0$. 
	Since $E^{\eD}_N(\alpha) \geq E^\eN_N(\alpha)$, it is natural to derive an upper 
	bound on $E^{\eD}_N(\alpha)$ and a lower bound on $E^{\eN}_N(\alpha)$.

	Our main result is as follows:	

\begin{theorem}[\textbf{Bounds for the ideal anyon gas}]
	\label{thm:main}
	There exist constants $0 < C_1 \leq C_2<\infty$ 
	such that for any $0\leq \alpha \leq 1$,
	\begin{equation}\label{eq:main-bound}
		C_1\alpha N^2 \left( 1 - O(N^{-1}) \right) \leq E^\eN_N(\alpha) \leq E^\eD_N(\alpha) \leq C_2 \alpha N^2 + O(N)  \quad \text{as $N\to\infty$.} 
	\end{equation}
	Moreover, in the limit $\alpha \to 0$,
	\begin{equation}\label{eq:main-bound-asymptotic}
		\liminf_{N\to\infty} \frac{ E^\eN_N(\alpha)}{N^2} \ge \frac{\pi}{4} \alpha \bigl(1-O(\alpha^{1/3})\bigr)\,.
	\end{equation}
\end{theorem}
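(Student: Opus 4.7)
The proof of \eqref{eq:main-bound} splits into an upper bound on $E_N^\eD(\alpha)$ and a lower bound on $E_N^\eN(\alpha)$, with \eqref{eq:main-bound-asymptotic} requiring a refinement of the constant in the lower bound. For the upper bound, my plan is to construct an explicit Jastrow trial state
$$\Psi_{\mathrm{trial}}(\bx)=\prod_{1\le j<k\le N}f(|\bx_j-\bx_k|)\,\Phi_0(\bx),$$
where $\Phi_0$ is a symmetric bosonic background supported in $Q_0$ with Dirichlet boundary values (e.g.\ localized on an $N^{-1/2}$-scale grid) and $f(r)$ is a smooth radial cutoff equal to $1$ for $r\gtrsim N^{-1/2}$ and behaving like $(rN^{1/2})^\alpha$ near $r=0$. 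The $r^\alpha$ behaviour places $\Psi_{\mathrm{trial}}\in\domD^N_\alpha$ despite the $|\bx_j-\bx_k|^{-1}$ singularity of $\bA_j$. Expanding $\sum_j|D_j\Psi_{\mathrm{trial}}|^2$, the radial gradient of $\log f$ contributes of order $\alpha$ per pair per unit area, while the perpendicular parts of $\alpha\bA_j$ cross-combine with $\nabla\Phi_0$ to give bounded remainders. Summing over the $\binom{N}{2}$ pairs gives the $C_2\alpha N^2$ term, and $\Phi_0$ adds $O(N)$.

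The lower bound rests on a many-body local exclusion inequality: I would prove
$$E_n^\eN(\alpha;Q)\;\ge\;\frac{c\,\alpha\,n(n-1)}{|Q|}$$
for every $n\ge 2$, every square $Q\subset\R^2$, and \emph{uniformly} in $\alpha\in(0,1]$, with a universal $c>0$. Applied to $Q=Q_0$ and $n=N$, this immediately gives $E_N^\eN(\alpha)\ge c\alpha N(N-1)$, which is the lower half of \eqref{eq:main-bound}. The inequality itself should follow from a two-body reduction: permutation symmetry lets one rewrite $\sum_j|D_j\Psi|^2$ as an average over ordered pairs of a relative-coordinate kinetic energy, and a magnetic Hardy inequality for the Aharonov--Bohm Laplacian of flux $\alpha$ on a disk (in the relative variable) furnishes a linear-in-$\alpha$ lower bound for each pair. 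The linearity in $\alpha$ here stems from the fact that for small $\alpha$ the first positive zero of $J'_\alpha$ tends to zero like $\sqrt\alpha$, so the lowest Neumann eigenvalue of the two-body relative operator is of order $\alpha/|Q|$. Multiplying by the pair count $n(n-1)/2$ supplies the $n(n-1)$ growth.

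To reach the sharp coefficient $\pi/4$ in \eqref{eq:main-bound-asymptotic} as $\alpha\to 0$, the constant $c$ above must be sharpened to $\pi/4+o(1)$ via a more refined small-$\alpha$ asymptotic of the two-body Neumann eigenvalue, combined with an optimization over the size of a partition of $Q_0$ into sub-boxes; the $O(\alpha^{1/3})$ correction reflects the mismatch between the natural two-body length scale $\sim\alpha^{-1/2}$ and the partition scale. The principal technical obstacle is the uniform-in-$\alpha$ local exclusion bound itself: existing many-body magnetic Hardy inequalities of Lundholm--Solovej type have constants depending discontinuously on $\alpha$, vanishing on a dense set of irrationals because they invoke the algebraic structure of the total flux through clusters of particles. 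A purely two-body treatment combined with careful pairwise symmetrization, avoiding any appeal to collective flux cancellations, is required to obtain a bound linear in $\alpha$ for every $\alpha\in(0,1]$.
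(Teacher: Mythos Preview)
Your upper-bound strategy is in the right spirit but differs from the paper's: rather than an all-pairs Jastrow factor, the paper first proves a subadditivity lemma (the Dirichlet energy on a disjoint union is at most the sum of the energies on the pieces, after gauging away cross-fluxes), then uses a Dyson nearest-neighbor ansatz on small boxes containing $n\sim 1/\alpha$ particles each. Your all-pairs Jastrow could in principle work, but you have not said how you control the off-diagonal three-body terms $\sum_{k\neq l}(\bx_j-\bx_k)^{-\perp}\!\cdot(\bx_j-\bx_l)^{-\perp}$ in $|\bA_j|^2$ uniformly in $\alpha$ and $N$; the paper's box decomposition sidesteps this entirely.

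The lower bound is where your plan has a genuine gap. You write that ``permutation symmetry lets one rewrite $\sum_j|D_j\Psi|^2$ as an average over ordered pairs of a relative-coordinate kinetic energy'', and then sum a two-body bound over $\binom{n}{2}$ pairs. No such decomposition exists: $D_j=-i\nabla_j+\alpha\sum_{k\neq j}(\bx_j-\bx_k)^{-\perp}$ couples particle $j$ to \emph{all} others simultaneously, and $D_j^2$ contains irreducible three-body terms. There is no operator inequality bounding $\hat T_\alpha$ from below by a sum of two-anyon kinetic energies, and any attempt to isolate a single pair by an angular-momentum (Hardy) argument runs into exactly the flux-counting problem you mention at the end --- when particle $j$ encircles particle $k$, the enclosed flux depends on how many other particles lie inside the loop, which is what produces the $\alpha_N$/$\alpha_*$ dependence in the Lundholm--Solovej bounds. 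Saying ``careful pairwise symmetrization, avoiding any appeal to collective flux cancellations'' names the goal but not a mechanism.

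The paper's route is entirely different and does not pass through any pairwise Hardy inequality. The key new lemma is a \emph{recursive scaling} argument: split $Q_0$ into four congruent sub-squares; by superadditivity and scale invariance, $E_N^\eN(\alpha)\ge 4\min_{\lceil N/4\rceil\le k\le N}E_k^\eN(\alpha)$, since some sub-square must contain at least $N/4$ particles. Iterating this gives
\[
E_N^\eN(\alpha)\ \ge\ \tfrac{N}{4}\min\bigl\{E_2^\eN(\alpha),E_3^\eN(\alpha),E_4^\eN(\alpha)\bigr\},
\]
which is linear in $N$ with a constant depending only on the \emph{few}-particle energies. A separate a~priori lemma (splitting into four squares and projecting onto the constant function) controls $E_3^\eN$ and $E_4^\eN$ in terms of $E_2^\eN$, and the two-particle energy satisfies $E_2^\eN(\alpha)=4\pi\alpha(1+O(\alpha^{1/3}))$ by the Bessel analysis you allude to. Only \emph{after} this linear bound is in hand does one lift it to a quadratic bound by a second box decomposition (choose the box scale so that the expected occupancy is $\approx 2$ and use convexity of $x\mapsto(x-1)_+$). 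The factor $\pi/4$ comes from $\tfrac14\cdot\tfrac14\cdot 4\pi$ (the two quarters from the recursive lemma and the linear-to-quadratic step), and the $O(\alpha^{1/3})$ error is inherited from the two-body Neumann asymptotics, not from a length-scale mismatch as you suggest.
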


	These results should be compared with previous results in \cite{LarLun-16} and \cite{LunSol-13b}, respectively. 
	In  \cite{LunSol-13b} the upper bound
	$$ 
	E^\eD_N(\alpha)/N^2 \le 2\pi^2 + O(N^{-1/2}) \quad \text{independently of $\alpha$}
	$$
	was derived (the constant was not made explicit however). 
	In \cite[Theorem~1.5]{LarLun-16}, lower bounds were given 
	utilizing methods developed in 
	\cite{LunSol-13a,LunSol-13b,LunSol-14}
	to bound the anyon interaction in terms of an effective pair
	interaction, which is  of long range and has a coupling strength that depends
	on number-theoretic properties of $\alpha$.
	Namely, for rational $\alpha$ of the form of a reduced fraction 
	$\alpha = \mu/\nu$ with $\mu,\nu\in \N$, $\nu \geq 2$ and $\mu$ odd, one defines $\alpha_* := 1/\nu$,
	and $\alpha_* := 0$ otherwise. Note that $\alpha_* > 0$
	if and only if $\alpha$ is an odd-numerator rational. 
	The result of \cite[Theorem~1.5]{LarLun-16} is that 
	\begin{equation}\label{eq:main-bound-asymptotic-old}
		\liminf_{N\to\infty} \frac{ E^\eN_N(\alpha)}{N^2} \ge \left\{ \begin{array}{ll}
		\tilde C_1 \alpha_* & \text{for some constant $\tilde C_1>0$} \\ 
		{\pi}  \alpha_*  \bigl(1-O(\alpha_*^{1/3})\bigr)  &  \text{as $\alpha_*\to 0$.} \end{array} \right.
	\end{equation}
	While our lower bound \eqref{eq:main-bound-asymptotic} is weaker by a factor 4 for small $\alpha$ if $\alpha=\alpha_*$, it is valid for {\it all} $\alpha$, not just odd-numerator rationals. 

	Theorem~\ref{thm:main} answers a question raised in \cite{LunSol-13a,LunSol-13b}
	whether for $\alpha_* = 0$ (and $\alpha\neq 0$) the energy $E^{\eN/\eD}_N(\alpha)$  could be of lower 
	order in $N$ than the one for fermions or anyons with $\alpha_* > 0$. 
	It shows that the behavior of the ground-state energy is fermionic, for any $\alpha\neq 0$. 
	However, it still leaves open the possibility that the exact energy in the thermodynamic limit
	may be smaller around even-numerator rational $\alpha$
	than around $\alpha$ with relatively large $\alpha_*$, 
	i.e.\ odd-numerator rationals with small denominator. 
	In particular, it is not known whether it depends smoothly, 
	or even continuously, on $\alpha$.
	We refer to \cite{CorLunRou-proc-17,CorLunRou-16,Lundholm-16,Lundholm-17,LunSol-13b}
	for further discussion on 
	the $\alpha$-dependence of the ground-state energy.	
			
	The improved lower bounds in Theorem~\ref{thm:main} can be used to show the validity of a Lieb--Thirring inequality for anyons, extending the result derived in \cite{LunSol-13a}.
	Originally, Lieb and Thirring considered fermions in the context of
	stability of interacting Coulomb systems \cite{LieThi-75,LieThi-76} 
	(see also \cite{LieSei-09}), 
	and proved a uniform bound 
	for the kinetic energy of any fermionic many-body wave function $\Psi$
	in terms of an $L^p$-norm of its one-body density, defined as 
	\begin{equation}\label{def:rho}
		\varrho_\Psi(\bx) :=N  \int_{\R^{2(N-1)}} |\Psi(\bx, 
		\bx_{2},  \ldots, \bx_N)|^2 \prod_{k \geq 2}d\bx_k, 
	\end{equation}
	(in fact, $p=2$ in two dimensions) 
	thereby combining the uncertainty and Pauli exclusion principles 
	of quantum mechanics into a single powerful bound.
	In \cite[Theorem~1]{LunSol-13a}, an inequality of this type was proved to 
	hold for anyons in case $\alpha_* > 0$, with a quadratic dependence on $\alpha_*$, 
	and  was later improved in \cite[Theorem~1.6]{LarLun-16} 
	where a linear dependence in $\alpha_*$ was obtained. 
	Here we extend these results to all anyons except for bosons, i.e., any $0<\alpha\leq 1$. 
	
\begin{theorem}[\textbf{Lieb--Thirring inequality for ideal anyons}]
	\label{thm:LT} 
	There exists a constant $C>0$ such that
	for any $0\leq \alpha\leq 1$, $N \ge 1$ and $\Psi \in \domD_\alpha^N$
	\begin{equation} \label{eq:intro-LT-kinetic}
		\sum_{j=1}^N \int_{\R^{2N}} |D_j\Psi|^2
		\ge C \alpha  \int_{\R^2} \varrho_\Psi(\bx)^2 \, d\bx\,.
	\end{equation}
\end{theorem}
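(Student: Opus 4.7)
Theorem~\ref{thm:LT} is obtained by feeding the improved lower bound of Theorem~\ref{thm:main} into the \emph{local exclusion plus adaptive covering} strategy of \cite{LunSol-13a,LarLun-16}. What is new is not the machinery, which is already established, but the input: the lower bound in Theorem~\ref{thm:main} is linear in $\alpha$ uniformly in $\alpha\in(0,1]$, whereas the analogous local bound used in \cite{LarLun-16} was only linear in the number-theoretic parameter $\alpha_*$. This linearity in $\alpha$ propagates directly to the Lieb--Thirring constant.

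The first step is to recast Theorem~\ref{thm:main} as a local exclusion principle: combining the lower bound in \eqref{eq:main-bound} with the scaling \eqref{eq:scaling} and absorbing the $O(N^{-1})$ correction into the constant, one obtains $c>0$ and $N_0\in\N$, independent of $\alpha$, such that
$$
 E^{\eN}_N(\alpha;Q)\ \ge\ c\,\alpha\,\frac{N^2}{|Q|}\qquad\text{for all squares }Q\subset\R^2\text{ and all }N\ge N_0.
$$

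The second step is a many-body Neumann bracketing for the anyon operator, of the type already carried out in \cite{LunSol-13a,LunSol-14}: given any partition $\R^2=\bigsqcup_k Q_k$ into squares and any $\Psi\in\domD_\alpha^N$,
$$
  \sum_{j=1}^N\int|D_j\Psi|^2\ \ge\ \sum_k\bigl\langle E^{\eN}_{n_k}(\alpha;Q_k)\bigr\rangle_\Psi,
$$
where $n_k(\sx):=\#\{j:\bx_j\in Q_k\}$ and $\langle\,\cdot\,\rangle_\Psi$ denotes expectation in $|\Psi|^2\,d\sx$. The reason this holds in the anyonic setting is that, with positions of particles outside $Q_k$ held fixed, the vector potentials $\alpha\bA_j$ seen by a particle $j$ inside $Q_k$ have vanishing curl there and can be gauged away by a singular gauge transformation, reducing the Neumann realization on $Q_k^{n_k}$ to the kinetic operator of $n_k$ anyons on $Q_k$. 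Combining with Step~1 yields, on the cubes where $n_k\ge N_0$, a quadratic-in-$n_k$ lower bound with prefactor $c\alpha/|Q_k|$.

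The final step is to choose the partition adaptively from $\varrho_\Psi$: via a dyadic stopping rule (split any cube with $\int_Q\varrho_\Psi>\kappa$ into four equal subcubes, stop when $\int_Q\varrho_\Psi\le\kappa$), with a fixed threshold $\kappa\gg N_0$, one obtains a collection of cubes satisfying the combinatorial estimate
$$
 \sum_k\frac{(\int_{Q_k}\varrho_\Psi)^2}{|Q_k|}\ \gtrsim\ \int_{\R^2}\varrho_\Psi^2,
$$
and one passes from expectations of $n_k^2$ to squared expectations of $n_k$ using Jensen's inequality, absorbing the threshold $n_k\ge N_0$ by virtue of $\kappa\gg N_0$. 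This yields \eqref{eq:intro-LT-kinetic}. The main technical obstacles are the many-body Neumann bracketing of Step~2, which is subtle because of the long-range Aharonov--Bohm interactions, and the adaptive-covering combinatorial estimate of Step~3; both, however, are essentially $\alpha$-independent and already appear in \cite{LunSol-13a,LarLun-16}, so once Theorem~\ref{thm:main} is in hand, the proof reduces to rerunning that argument with $\alpha$ in place of $\alpha_*$.
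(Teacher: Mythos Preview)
Your proposal is correct and follows the same overall strategy as the paper: feed the new $\alpha$-linear local energy bound into the local exclusion plus adaptive covering machinery of \cite{LunSol-13a}. The only noteworthy difference is the choice of local input. The paper does not go through the quadratic bound of Theorem~\ref{thm:main}; instead it uses the \emph{linear-in-$N$} bound $E_N^\eN(\alpha)\ge c(\alpha)(N-1)_+$ of Lemma~\ref{lem:linear} (equations \eqref{eq:new-lower-bound-c}--\eqref{eq:new-lower-bound}) directly, which plugs verbatim into \cite[Lemma~8]{LunSol-13a} to give the local exclusion principle
\[
T_\alpha^Q[\Psi]\ \ge\ \frac{c(\alpha)}{|Q|}\Bigl(\textstyle\int_Q\varrho_\Psi-1\Bigr)_+
\]
(Lemma~\ref{lem:local-exclusion}), after which the covering argument is literally that of \cite{LunSol-13a} with $c(\alpha)\ge C'\alpha$ in place of the $\alpha_*$-dependent constant. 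Your route via the quadratic bound is a slight detour (the quadratic bound is itself derived from the linear one via Lemma~\ref{lem:quadratic}), and it makes the Jensen step a bit more delicate: since $E_n\ge c\alpha n^2$ only for $n\ge N_0$, you implicitly need something like $E_n\ge c\alpha(n-N_0)_+^2$ for all $n$ to invoke convexity, and then choose the stopping threshold $\kappa$ large compared to $N_0$ so that $(\langle n_k\rangle-N_0)_+^2\gtrsim\langle n_k\rangle^2$ on the stopped cubes. This works, but the paper's linear input avoids the threshold issue entirely.
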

	
	One simple consequence of Theorem~\ref{thm:LT} concerns the ground-state energy of $\hat T_\alpha + \hat V$, where $\hat V(\bx_1,\ldots,\bx_N) := \sum_{j=1}^N V(\bx_j)$ for a one-body potential  $V\colon \R^2 \to \R$. One gets
	\begin{equation} \label{eq:intro-LT-potential}
		\infspec \left( \hat T_\alpha + \hat V\right) 
				\ge -\frac {1}{4C  \alpha} \int_{\R^2} V_-(\bx)^2 \, d\bx
	\end{equation}
	independently of $N$, 
	where  $V_{-} := \max\{- V, 0\}$ denotes the negative part of $V$. 
	Applying this, e.g., to $V(\bx)=|\bx|^2 - \mu$ and optimizing over $\mu>0$ gives the lower bound $\frac 43 N^{3/2} \sqrt{C\alpha/\pi}$ 
	on the ground state energy of the ideal anyon gas in a harmonic oscillator potential.

	The bound~\eqref{eq:intro-LT-potential} may for example be
	applied in a physically relevant 
	setting involving several species of charged 
	particles subject to Coulomb interactions
	and confined to a very thin two-dimensional layer.
	Taking one of the species of particles in the layer to be anyons, 
	as was previously considered in \cite[Theorem~21]{LunSol-14}, 
	our result proves that such a system is thermodynamically stable for any type 
	of anyon except for bosons.
	Our method of proof also clarifies that, at least in two dimensions, 
	stability is a consequence solely of the local two-particle repulsive properties 
	of any of the component species,
	in the sense that all that is required is 
	a strictly positive energy $E_2^\eN(\alpha)$,
	generalizing the Pauli exclusion principle.

\medskip\noindent\textbf{Acknowledgments.} 
D.\,L. would like to thank Simon Larson for discussions.
Financial support from the Swedish Research Council, 
grant no. {2013-4734} (D.\,L.), the European Research Council (ERC) under the European Union's Horizon 2020 research and innovation programme (grant agreement No 694227, R.\,S.), and by the Austrian Science Fund (FWF), project Nr. P 27533-N27 (R.\,S.), is gratefully acknowledged.

\section{Upper bounds}\label{sec:upper-bounds}

	A key tool for obtaining upper bounds is to use the fact that interactions between particles with wave functions supported on disjoint sets can be gauged away, as described in  \cite{LunSol-13b}. In fact, we have the following subadditivity property for the Dirichlet energy $E_N^{\eD}(\alpha;\Omega)$ on a general domain $\Omega\subset \R^2$. 
	
\begin{lemma}[Subadditivity]\label{lem:subadd}
If $\Omega_1$ and $\Omega_2$ are disjoint 
and simply connected subsets of $\R^2$, then
$$
E_{N_1+N_2}^{\eD}(\alpha;\Omega_1\cup\Omega_2) \leq E_{N_1}^\eD(\alpha;\Omega_1) + E_{N_2}^\eD(\alpha;\Omega_2)
$$
for any $0\leq \alpha\leq 1$ and $N_1,N_2\geq 1$. 
\end{lemma}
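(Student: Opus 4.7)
The plan is to build a variational test function for the $(N_1+N_2)$-particle problem on $\Omega_1\cup\Omega_2$ out of near-minimizers on each $\Omega_i$ separately, exploiting the fact that on disjoint simply connected regions the Aharonov--Bohm interactions between a particle in $\Omega_1$ and a particle in $\Omega_2$ are gauge-equivalent to zero. Concretely, for any $\eps>0$ I fix normalized $\Psi_i\in\domD^{N_i}_\alpha$ with compact support in $\Omega_i^{N_i}$ and $\langle\Psi_i,\hT_\alpha\Psi_i\rangle\le E_{N_i}^\eD(\alpha;\Omega_i)+\eps$, and assemble them into a symmetric trial state on $(\Omega_1\cup\Omega_2)^{N_1+N_2}$.

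For each subset $I\subset\{1,\dots,N_1+N_2\}$ of size $N_1$, let $R_I$ denote the region where $\bx_i\in\Omega_1$ iff $i\in I$; because $\Omega_1\cap\Omega_2=\emptyset$, the $R_I$ are pairwise disjoint. Using simple connectivity of each $\Omega_i$ (which does not meet the other), a smooth single-valued branch of
$$
\Phi_I(\bx) := \sum_{i\in I}\sum_{k\notin I}\arg(\bx_i-\bx_k)
$$
exists on $R_I$, with $\nabla_{\bx_j}\Phi_I$ equal precisely to the ``cross'' part of the vector potential $\bA_j$, namely the sum over $k$ in the block opposite to $j$. I then set
$$
\tilde\Psi := \binom{N_1+N_2}{N_1}^{\!-1/2}\sum_{|I|=N_1} e^{-i\alpha\Phi_I}\,\Psi_1(\bx_I)\,\Psi_2(\bx_{I^c}),
$$
where $\Psi_i(\bx_I)$ denotes $\Psi_i$ evaluated at the coordinates indexed by $I$. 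By permutation symmetry of each $\Psi_i$, $\tilde\Psi$ is symmetric in all $N_1+N_2$ variables; it is compactly supported in $(\Omega_1\cup\Omega_2)^{N_1+N_2}$; and because the summands are supported in disjoint $R_I$, no interference occurs and $\|\tilde\Psi\|=1$. That $\tilde\Psi$ lies in the Dirichlet form domain for $\hT_\alpha$ is routine, since the gauge factor is smooth on each $R_I$ and $\Psi_i\in\domD^{N_i}_\alpha$ absorbs the within-block diagonal singularities.

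The energy computation then proceeds by gauge transformation on each $R_I$ separately. The conjugation $e^{i\alpha\Phi_I}D_j e^{-i\alpha\Phi_I}=D_j-\alpha\nabla_{\bx_j}\Phi_I$ cancels exactly the cross contributions to $\bA_j$, leaving only the $N_i$-body anyonic momentum on the block containing $j$; and since $\Psi_1(\bx_I)$ and $\Psi_2(\bx_{I^c})$ depend on disjoint sets of variables, this momentum acts on only one factor while the other passes through. Consequently each of the $\binom{N_1+N_2}{N_1}$ regions contributes the same value $\langle\Psi_1,\hT_\alpha\Psi_1\rangle+\langle\Psi_2,\hT_\alpha\Psi_2\rangle$, and the combinatorial prefactor cancels exactly, yielding $\langle\tilde\Psi,\hT_\alpha\tilde\Psi\rangle\le E_{N_1}^\eD(\alpha;\Omega_1)+E_{N_2}^\eD(\alpha;\Omega_2)+2\eps$. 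Letting $\eps\to 0$ concludes the argument. The only non-mechanical step is the construction of the single-valued phase $\Phi_I$, which is where simple connectivity of the two regions enters decisively; after that, the bookkeeping reduces to disjointness of supports and the Leibniz rule for the magnetic momentum.
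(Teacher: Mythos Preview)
Your proof is correct and follows essentially the same approach as the paper's: both construct the trial state by symmetrizing a product $\Psi_1\Psi_2$ multiplied by the gauge factor $e^{-i\alpha\sum\arg(\bx_j-\bx_k)}$ (with the sum running over cross-block pairs), then observe that this gauge factor exactly cancels the cross-block Aharonov--Bohm potentials in $D_j$. Your presentation is somewhat more explicit than the paper's, which simply writes the trial state as $\cS[\Phi_1\Phi_2\prod_{j\le N_1<k}e^{-i\alpha\phi_{jk}}]$ and asserts the energy identity; you have unpacked the symmetrization as the sum over index sets $I$ and tracked the combinatorial normalization, but the underlying construction and the role of simple connectivity are identical.
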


\begin{proof}
	Let $\Phi_1(\bx_1,\ldots,\bx_{N_1})$ be a function 
	in $\domD_\alpha^{N_1}$
	supported on $\Omega_1^{N_1}$, and similarly for $\Phi_2$ supported on $\Omega_2^{N_2}$. As a trial state for the $N_1+N_2$ particle problem, we can take 
		$$
		\Psi(\sx) = \cS\Biggl[
			\Phi_1(\bx_1,\ldots,\bx_{N_1}) \Phi_2 (\bx_{N_1+1},\ldots,\bx_{N_1+N_2} ) 
			\prod_{1\leq j \leq N_1 < k \leq N_1+N_2 } e^{-i\alpha\phi_{jk}}
			\Biggr]
	$$
	where
	\begin{equation}\label{def:phi}
		\phi_{jk} = \arg \frac{z_j-z_k}{|z_j-z_k|} ,
		\quad z_j := x_j + iy_j,
	\end{equation}
	and $\cS$ denotes symmetrization. 
	The phase factor $\phi_{jk}$ is a-priori only defined modulo $2\pi$, but can be chosen in a smooth way for $z_j\in \Omega_1$, $z_k\in \Omega_2$ due to our assumptions  on these domains.
	A simple calculation shows that 
	$$
	\sum_{j=1}^{N_1+N_2} \int_{(\Omega_1\cup\Omega_2)^{N_1+N_2} } |D_j \Psi|^2 = \sum_{j=1}^{N_1} \int_{\Omega_1^{N_1} } |D'_j \Phi_1|^2 + \sum_{j=1}^{N_2} \int_{\Omega_2^{N_2} } |D''_j \Phi_2|^2 
	$$
	where  $D_j' = -i\nabla_j + \alpha\sum_{1\leq k \leq N_1,\, k\neq j } (\bx_j-\bx_k)^{-\perp}$ and likewise for $D_j''$ 
	(involving only the particles in $\Omega_2$). The claimed bound readily follows.
	\end{proof}

The following lemma gives an upper bound on $E^\eD_N(\alpha)$ that is linear in $\alpha$ for small $\alpha$. It is restricted to small particle number, however. The bound follows from a calculation using a trial state similar to the one introduced by Dyson in \cite{Dyson-57} to obtain an upper bound on the ground state energy of the hard-sphere Bose gas.

\begin{lemma}[Upper bound {\`a} la Dyson]\label{lem:upper-dyson}
	If $8\pi \alpha N < 1$, then 
	\begin{equation}\label{eq:upperD}
        E_N^\eD(\alpha) \le 2 \pi^2 N
		+ \frac {9\pi} 2  N (N-1)  \alpha \frac {1 + \left(\frac 43\right)^3 20 \pi (N-2)\alpha  } { \left( 1 - 8 \pi \alpha N\right)^2}.
    	\end{equation}
    Furthermore, if $2\pi \alpha N <1$ then
	\begin{equation}\label{eq:upperN}
        E_N^\eN(\alpha) \le   2 \pi N (N-1) \alpha \frac {1 + \frac {20}{3} \pi (N-2)\alpha  } { \left( 1 - 2 \pi \alpha N\right)^2}.
    	\end{equation}
\end{lemma}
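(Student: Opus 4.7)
The plan is to follow Dyson's classical construction \cite{Dyson-57} by introducing the Jastrow-type trial state
\[
\Psi(\sx) = \prod_{j=1}^N \phi_0(\bx_j)\cdot \prod_{1\le i<j\le N} f(|\bx_i-\bx_j|),
\]
where $\phi_0$ is the normalized ground state of the one-body Dirichlet (resp.\ Neumann) Laplacian on $Q_0$ -- that is $\phi_0(\bx)=2\sin(\pi x)\sin(\pi y)$ with eigenvalue $2\pi^2$ for \eqref{eq:upperD}, producing the $2\pi^2 N$ term, or $\phi_0\equiv 1$ for \eqref{eq:upperN} -- and $f\colon [0,\infty)\to[0,1]$ is the radial minimizer of the associated two-body problem with a cutoff length $b>0$ to be optimized, namely $f(r)=(r/b)^\alpha$ for $r\le b$ and $f(r)=1$ for $r\ge b$. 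The exponent $\alpha$ matches the mandatory short-distance vanishing of any admissible ideal anyonic wave function at two-particle coincidences, ensuring that the magnetic kinetic energy is integrable there.

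Writing $\Psi=\Phi\cdot F$, a direct computation yields
\[
D_j\Psi = \Psi\cdot\Bigl[-i\,\frac{\nabla\phi_0(\bx_j)}{\phi_0(\bx_j)} + \sum_{k\neq j}\Bigl(-i\,\frac{f'(r_{jk})}{f(r_{jk})}\,\frac{\bx_j-\bx_k}{r_{jk}} + \alpha\,(\bx_j-\bx_k)^{-\perp}\Bigr)\Bigr],\qquad r_{jk}=|\bx_j-\bx_k|.
\]
Because the radial vector $(\bx_j-\bx_k)/r_{jk}$ and the tangential vector $r_{jk}(\bx_j-\bx_k)^{-\perp}$ are orthogonal, the ``diagonal'' contribution of each pair to $|D_j\Psi|^2$ equals $(f'/f)^2+\alpha^2/r_{jk}^2$. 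The specific choice of $f$ gives $\int_0^b [(f'(r))^2+\alpha^2 f(r)^2/r^2]\,2\pi r\,dr = 2\pi\alpha$, which combined with $\int_{Q_0}|\phi_0|^4 = 9/4$ in the Dirichlet case (or $1$ in the Neumann case), summed over the $N(N-1)$ ordered pairs, produces the leading pair contributions $\tfrac{9\pi}{2}N(N-1)\alpha$ and $2\pi N(N-1)\alpha$ respectively.

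It remains to handle (a) the three-body cross terms in $|D_j\Psi|^2$ coming from products of distinct summands ($k\neq k'$), which are supported on $\{r_{jk}<b\}\cup\{r_{jk'}<b\}$ and yield via Cauchy--Schwarz the $(N-2)\alpha$ corrections in the statement, and (b) the normalization $\|\Psi\|^2$. The latter is bounded above by $\|\Phi\|^2=1$ since $0\le f\le 1$, and from below by applying $1-\prod_{i<j}f^2(r_{ij})\le \sum_{i<j}(1-f^2(r_{ij}))$ pointwise and explicitly evaluating $\int(1-f^2)\,2\pi r\,dr$; the resulting Rayleigh quotient, optimized in $b$, produces the stated denominators $(1-8\pi\alpha N)^{-2}$ and $(1-2\pi\alpha N)^{-2}$. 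The principal technical obstacle is the combinatorial bookkeeping of three- and four-body overlaps between distinct pair factors: keeping the corrections proportional to $\alpha^2 N^3 b^2$ rather than letting them blow up under repeated overlaps requires careful analysis of the regions where several pairs are simultaneously close, and this packing analysis is the source of the explicit numerical constants appearing in \eqref{eq:upperD}--\eqref{eq:upperN}.
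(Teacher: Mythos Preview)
Your approach differs from the paper's in one essential respect: you use the full symmetric Jastrow product $\prod_{i<j}f(|\bx_i-\bx_j|)$, whereas the paper uses Dyson's \emph{nearest-neighbor} ansatz
\[
\Phi(\sx)=\prod_{j=1}^N\varphi(\bx_j)\,f\bigl(\bx_j-\by_j(\bx_j;\bx_1,\dots,\bx_{j-1})\bigr),
\]
where $\by_j$ is the closest among the first $j-1$ particles. The paper can use this non-symmetric $\Phi$ because for a \emph{real} trial function the anyonic energy reduces to a Bose-gas functional $\sum_j\int(|\nabla_j\Phi|^2+\alpha^2|\bA_j|^2\Phi^2)$, and the bosonic ground state of such a functional coincides with its absolute ground state. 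This nearest-neighbor structure is precisely what sidesteps the ``combinatorial bookkeeping of three- and four-body overlaps'' you flag as the principal obstacle: each particle carries only one $f$-factor, so the cross terms in $|\nabla_j\Phi|^2$ stay under control and the standard Dyson estimates from \cite{Dyson-57,LieSeiYng-00,LieYng-01} apply directly, producing the displayed numerator and the denominator $(1-N\|\varphi\|_\infty^2\int(1-f^2))^2$.

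Two points in your outline would not recover the stated constants. First, the cutoff is not a free parameter to be optimized: the magnetic interaction $\alpha^2|\bx_j-\bx_k|^{-2}$ is present whenever both particles lie in $Q_0$, so the relevant integrals run over the ball of radius $\sqrt{2}$ regardless of $f$, and the paper simply takes $f(\bx)=\min\{(|\bx|/\sqrt 2)^\alpha,1\}$. Second, with a full Jastrow your norm lower bound via $1-\prod f^2\le\sum(1-f^2)$ picks up $\binom{N}{2}$ terms rather than the $N$ appearing in the Dyson denominator, so you would land on a bound of the same qualitative form $2\pi^2N+O(\alpha N^2)$ but with different (and somewhat worse) explicit constants than those in \eqref{eq:upperD}--\eqref{eq:upperN}.
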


\begin{proof}
	We choose as a trial state a real-valued function $\Phi$, in which case 
	\begin{equation}
		\sum_{j=1}^N \int_{Q_0^N} |D_j \Phi|^2
		= \sum_{j=1}^N \int_{Q_0^N} \bigl( |\nabla_j\Phi|^2 + \alpha^2|\bA_j|^2 \Phi^2 \bigr), \label{eq:BG}
	\end{equation}
	which is the energy of an $N$-body Bose gas with two- and three-body interactions of the form 
	\begin{align}\nonumber
		\sum_{j=1}^N |\bA_j|^2 
		& = \sum_{j=1}^N \sum_{k \neq j} \sum_{l \neq j} (\bx_j-\bx_k)^{-\perp} \cdot (\bx_j-\bx_l)^{-\perp}
		\\ &= \sum_{j \neq k} |\bx_j-\bx_k|^{-2} 
		+ \sum_{j \neq k \neq l \neq j} (\bx_j-\bx_k)^{-\perp} \cdot (\bx_j-\bx_l)^{-\perp}. \label{eq:BGint}
	\end{align}
	It is well known that the minimum of the right side of \eqref{eq:BG} over {\em all} functions $\Phi$ is the same as the one over only bosonic $\Phi$ (see e.g. \cite[Corollary~3.1]{LieSei-09}), hence we may choose a $\Phi$ that is not permutation-symmetric. In particular, we can use a Dyson ansatz \cite{Dyson-57,LieSeiYng-00,LieYng-01}  of the form
	\begin{equation}\label{eq:vard}
	\Phi(\bx_1,\ldots,\bx_N) = \prod_{j=1}^N \varphi(\bx_j) f(\bx_j - \by_j(\bx_j;\bx_1,\ldots,\bx_{j-1})) 
	\end{equation}
	where we take $\varphi(\bx) = 2 \sin(\pi x) \sin(\pi y)$ to be the 
	$L^2$-normalized ground state of the Dirichlet Laplacian on $Q_0$, 
	$f$ is a nonnegative radial function bounded by $1$, 
	and $\by_j(\bx_j;\bx_1,\ldots,\bx_{j-1})$ denotes the nearest neighbor 
	of $\bx_j$ among the points $\{\bx_1,\ldots,\bx_{j-1}\}$. 
	A straightforward generalization of the calculation in 
	\cite{Dyson-57,LieSeiYng-00,LieYng-01} leads to the upper bound
	\begin{align*}
	 E_N^\eD(\alpha)& \leq  \|\Phi\|^{-2}  \sum_{j=1}^N \int_{Q_0^N} \bigl( |\nabla_j\Phi|^2 + \alpha^2|\bA_j|^2 \Phi^2 \bigr) \\ 
		& \leq 2 \pi^2 N 
			+  N(N-1)  \| \varphi\|_4^4  \frac {  \int_{\R^2} |\nabla f|^2 + \alpha^2 \int_B |f(\bx)|^2 |\bx|^{-2} } { \left( 1 - N \| \varphi\|_\infty^2 \int_{\R^2} (1 - f^2) \right)^2} \\ 
		& \quad + N(N-1)(N-2) \| \varphi\|_\infty^4  \frac { \frac 23  \left( \int_{\R^2} f |\nabla f| \right)^2 + \alpha^2 \left( \int_B |f(\bx)|^2 |\bx|^{-1} \right)^2  } { \left( 1 - N \| \varphi\|_\infty^2 \int_{\R^2} (1 - f^2) \right)^2}\,,
	 \end{align*} 
	 assuming that the term in parentheses in the denominators is strictly positive. Here $B$ denotes the ball of radius $\sqrt{2}$ centered at the origin. 
	 Note that $\|\varphi\|_4^4 = 9/4$ and $\|\varphi\|_\infty^2 = 4$. 
	 We shall choose 
	 $$
	 f(\bx) = \min\left\{ \left( |\bx| /  \sqrt{2} \right)^\alpha ,1 \right\}
	 $$
	 in which case
	 $$
	 \int_{\R^2} |\nabla f|^2 = \alpha^2 \int_B f(\bx)^2 |\bx|^{-2} =  \pi \alpha 
	 $$
	 as well as 
	 $$
	 \int_{\R^2} f|\nabla f|  = \alpha \int_B f(\bx)^2 |\bx|^{-1}  =  \frac{\sqrt{8}\pi\alpha}{1+2\alpha}    
	 \quad  \text{and} \quad  
	 \int_{\R^2} (1-f^2) = \frac {2\pi \alpha}{1+\alpha}.
	$$
	This leads to the claimed upper bound \eqref{eq:upperD}. 
	
	The same strategy can be used to obtain the upper bound \eqref{eq:upperN}
	on the Neumann energy $E^\eN_N(\alpha)$. 
	In this case, one simply chooses $\varphi=1$ in \eqref{eq:vard}. 
\end{proof}

A combination of Lemmas~\ref{lem:subadd} and \ref{lem:upper-dyson} leads to the following result, which immediately implies the upper bound claimed 
in~\eqref{eq:main-bound} in Theorem~\ref{thm:main}.

\begin{proposition}[Global upper bound]\label{lem:upper-linear}
	There exists a constant $C>0$ such that
	for any $0\leq \alpha \leq 1$ and any $N \ge 1$   we have
	$$
		E_N^\eD(\alpha) \le C \left( N + \alpha N^2\right)  \,.
	$$
\end{proposition}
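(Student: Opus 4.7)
The strategy is to combine the two preceding lemmas through a partition of $Q_0$ into $M^2$ congruent sub-squares of side $1/M$, with the $N$ particles distributed among them as evenly as possible (so that $n_i \in \{\lfloor N/M^2\rfloor, \lceil N/M^2\rceil\}$ particles occupy the $i$-th sub-square). A straightforward induction extends Lemma \ref{lem:subadd} from two to $M^2$ pairwise disjoint, simply connected pieces, and together with the scaling \eqref{eq:scaling} this yields
$$
E_N^\eD(\alpha) \;\le\; M^2 \sum_{i=1}^{M^2} E_{n_i}^\eD(\alpha)\,.
$$
I would sum the per-sub-square Dyson bounds over the $n_i$ individually rather than bounding them by an overall maximum, thereby avoiding any need for monotonicity of $E_k^\eD(\alpha)$ in $k$. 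The task then reduces to bounding each $E_{n_i}^\eD(\alpha)$ and choosing $M$ appropriately.

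For this I would distinguish three regimes of $(\alpha, N)$. When $\alpha N$ is small enough that $8\pi\alpha N \le 1/2$, take $M=1$ and apply Lemma \ref{lem:upper-dyson} directly: the denominator in \eqref{eq:upperD} is bounded below by $1/4$, the factor $(4/3)^3 \cdot 20\pi(N-2)\alpha$ is $O(\alpha N) = O(1)$, and the bound takes the claimed form $C(N + \alpha N^2)$. When $\alpha N$ is large but $\alpha$ stays below a small fixed threshold $\alpha_0$, take $M = \lceil c_0 \sqrt{\alpha N}\rceil$ with $c_0$ chosen so that $n := \lceil N/M^2\rceil$ satisfies $8\pi\alpha n \le 1/2$; Lemma \ref{lem:upper-dyson} then gives $E_{n_i}^\eD(\alpha) \le 2\pi^2 n_i + C_1 \alpha n_i^2$ with $C_1$ uniformly bounded, and summing (using $\sum_i n_i = N$ and $\sum_i n_i^2 \le n N$) produces
$$
E_N^\eD(\alpha) \;\le\; M^2(2\pi^2 N + C_1\alpha n N) \;\lesssim\; M^2 N + \alpha N^2,
$$
which collapses to $O(N + \alpha N^2)$ since $M^2 \lesssim \alpha N$. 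Finally, for $\alpha \ge \alpha_0$, the Dyson condition cannot be met for any $n \ge 1$, so I would instead take $M = \lceil\sqrt{N}\rceil$ and place at most one particle in each sub-square; since a single particle on $Q_{1/M}$ has ground-state energy exactly $2\pi^2 M^2$ with no interactions to dress, this gives $E_N^\eD(\alpha) \le 2\pi^2 N M^2 \le 8\pi^2 N^2 \le (8\pi^2/\alpha_0)\alpha N^2$.

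The main obstacle I anticipate is purely the bookkeeping for the intermediate regime: pinning down $c_0$ and $\alpha_0$ so that $8\pi\alpha n < 1$ holds uniformly after accounting for the ceiling in $n$, and verifying that the Dyson prefactors in \eqref{eq:upperD} remain bounded. Otherwise the argument is a clean application of subadditivity plus the Dyson ansatz already supplied by Lemma \ref{lem:upper-dyson}, and no new analytic input is needed.
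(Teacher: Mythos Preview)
Your proposal is correct and follows essentially the same route as the paper: partition $Q_0$ into $M^2$ congruent sub-squares, invoke subadditivity and scaling, and treat three regimes (small $\alpha N$ with $M=1$; small $\alpha$ but large $\alpha N$ with $M\sim\sqrt{\alpha N}$ so that Lemma~\ref{lem:upper-dyson} applies per box; large $\alpha$ with one particle per box). The paper organizes the thresholds slightly differently (fixing $n$ first and setting $M=\lceil(N/n)^{1/2}\rceil$, with cutoffs at $16\pi\alpha=1$ and $16\pi\alpha N=1$), but the structure and the analytic input are the same, and your anticipated bookkeeping about the ceilings is exactly what the paper also carries out.
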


\begin{proof}
	We shall divide the unit square $Q_0$ into disjoint smaller boxes and place a fixed number $n \ge 1$  particles in each box. 
	More precisely, we divide $Q_0$ into $M^2$ smaller boxes (squares) $\{ Q_q \}_{q=1}^{M^2}$ 
	of side length $M^{-1}$, with $M = \lceil (N/n)^{1/2} \rceil$.  
	We place $n$ particles into as many boxes as possible, and fewer than $n$ in the remaining ones, if necessary. 
	Denoting the number of particles in $Q_q$ by $n_q$, and using the 
	subadditivity in Lemma~\ref{lem:subadd} as well as the scaling property 
	\eqref{eq:scaling}, we obtain
	\begin{equation}\label{eq:use}
	E_N^\eD(\alpha) \leq M^2 \sum_{q=1}^{M^2} E^\eD_{n_q}(\alpha).
	\end{equation}
	
	We shall distinguish three cases. First, if $16\pi \alpha\geq 1$, we shall use \eqref{eq:use} for $n=1$. Since $E_1^\eD(\alpha) = 2 \pi^2$, we obtain
	$$
	E_N^\eD(\alpha) \leq 2 \pi^2 M^2 N \leq 2\pi^2 N \left( N^{1/2} + 1 \right)^2.
	$$
	In the opposite case $16\pi \alpha<1$, we shall choose $n$ such that $8\pi \alpha n < 1$, in which case we can apply the bound of Lemma~\ref{lem:upper-dyson} to $E^\eD_{n_q}(\alpha)$, and obtain
	$$
	E_N^\eD(\alpha) \leq M^2 \left( 2\pi^2 N  +  \frac{9 \pi}2 N n \alpha \frac {1 + \left( \frac 43\right)^3 20 \pi n\alpha  } { \left( 1 - 8 \pi \alpha n\right)^2}  \right) 
	$$
	using $n_q\leq n$ on each box. 
	Now if also $16 \pi \alpha N < 1$, we take $n=N$, i.e., $M=1$, and obtain 
	$$
	E_N^\eD(\alpha) \leq  2\pi^2 N  +  2 \pi N^2  \alpha \left( 9 + \frac {80}{3} \right).
	$$
	Finally, if $16\pi \alpha<1$ and $16 \pi \alpha N \geq  1$, we take 
	$n= \lfloor \frac 1{16\pi \alpha} \rfloor$
	so that $16\pi\alpha n \le 1$. 
	Then $M \leq \lceil (32 \pi \alpha N)^{1/2} \rceil \leq \frac 32  (32 \pi \alpha N)^{1/2} $, hence 
	$$
	E_N^\eD(\alpha) \leq 72 \pi \alpha N^2  \left( 2 \pi^2  +  \frac {9}{8} + \frac {10}{3}  \right).
	$$
	This completes the proof of the proposition.
\end{proof}

\section{Lower bounds}\label{sec:lower-bounds}

	As in \cite{LunSol-13a,LunSol-13b,LunSol-14,LarLun-16}, 
	the key ingredient in the strategy to obtain  lower bounds is to first prove 
	a lower bound for the local
	Neumann energy that is \emph{linear} in the particle number $N$.
	By splitting the original domain suitably one may then lift such a bound
	to one that is \emph{quadratic} in 
	$N$.
	This method and local bound, 
	referred to as a ``local exclusion principle'',
	goes back to the way Dyson and Lenard 
	incorporated the Pauli exclusion principle for fermions in their original 
	proof of stability of matter \cite{DysLen-67},
	and was further developed in \cite{LunSol-13a,LunSol-14,LunPorSol-15,LunNamPor-16} 
	for interacting bosonic gases 
	and in \cite{FraSei-12} for a model of fermions with point interactions.

\subsection{Preliminaries}\label{sec:prev-bounds}

	We start by recalling some of the previously obtained lower bounds 
	which shall also  turn out to be useful in deriving the new bounds.
	The simplest one is the usual diamagnetic inequality which is also
	valid for anyons \cite[Lemma~4]{LunSol-14}
	and tells us that their kinetic energy is always at least as big as the one of bosons:
	
\begin{lemma}[Diamagnetic inequality]\label{lem:diamagnetic}
	For any $0 \le \alpha \le 1$, $N \ge 1$, $\Omega\subset \R^2$ and $\Psi \in \domD^N_\alpha$ 
	one has the inequality
	$$
		\sum_{j=1}^N \int_{\Omega^N} |D_j\Psi|^2 
		\ge \sum_{j=1}^N \int_{\Omega^N} \bigl|\nabla_j|\Psi|\bigr|^2.
	$$
\end{lemma}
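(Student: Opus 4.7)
The plan is to reduce to a pointwise Kato-type inequality in each variable $\bx_j$, then integrate and sum. For a $\Psi$ smooth enough that both $\nabla_j\Psi$ and $\alpha\bA_j\Psi$ are locally defined (say smooth and compactly supported away from the collision set $\bigcup_{k\neq j}\{\bx_j=\bx_k\}$), I introduce the regularization $|\Psi|_\eps := \sqrt{|\Psi|^2 + \eps^2}$ and compute
$$ |\Psi|_\eps\, \nabla_j|\Psi|_\eps = \tfrac{1}{2}\nabla_j|\Psi|^2 = \re\bigl(\bar\Psi\,\nabla_j\Psi\bigr) = -\im\bigl(\bar\Psi\, D_j\Psi\bigr), $$
where the last equality uses that $\bar\Psi(\alpha\bA_j)\Psi = \alpha\bA_j|\Psi|^2$ is real-valued and so contributes nothing to $\im(\bar\Psi D_j\Psi)$. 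Cauchy--Schwarz then gives the pointwise bound
$$ \bigl|\nabla_j|\Psi|_\eps\bigr|^2 \le \frac{|\Psi|^2}{|\Psi|_\eps^2}\, |D_j\Psi|^2 \le |D_j\Psi|^2. $$

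Next I would integrate over $\Omega^N$ and send $\eps\to 0$: on $\{|\Psi|>0\}$ one has $\nabla_j|\Psi|_\eps \to \nabla_j|\Psi|$ a.e., while on $\{|\Psi|=0\}$ the limiting gradient vanishes in the standard Sobolev sense, so Fatou yields $\int_{\Omega^N}\bigl|\nabla_j|\Psi|\bigr|^2 \le \int_{\Omega^N}|D_j\Psi|^2$. Summing over $j$ concludes the inequality for this class of test functions.

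To pass to a general $\Psi\in\domD^N_\alpha$, I would invoke the coincidence of the minimal and maximal realizations of $\sD_\alpha$ recalled in the text (the theorem of Lundholm--Solovej), which provides a sequence $\Psi^{(n)}\to\Psi$ of smooth functions supported away from the diagonal with $D_j\Psi^{(n)}\to D_j\Psi$ in $L^2$. The elementary bound $\bigl||\Psi^{(n)}|-|\Psi|\bigr|\le |\Psi^{(n)}-\Psi|$ gives $|\Psi^{(n)}|\to|\Psi|$ in $L^2$, and the uniform $L^2$-bound on $\nabla_j|\Psi^{(n)}|$ from the previous step together with lower semicontinuity of the Dirichlet seminorm allow passage to the limit.

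The main obstacle is the singularity of $\bA_j$, which is not in $L^2_{\loc}$ and therefore rules out a direct application of the classical smooth-$\bA$ diamagnetic inequality to $\Psi\in\domD^N_\alpha$. The workaround uses the specific algebraic structure of the Aharonov--Bohm coupling (so that the cross term $\bar\Psi\,\alpha\bA_j\Psi$ disappears after taking the imaginary part), together with the density result for the anyonic form domain. No regularity of the phase of $\Psi$ is needed, which is what makes the $|\Psi|_\eps$-regularization essential.
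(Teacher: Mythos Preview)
The paper does not give its own proof of this lemma; it merely cites \cite[Lemma~4]{LunSol-14} and states the result. Your argument is the standard Kato-regularization proof of the diamagnetic inequality, correctly adapted to the anyonic form domain via the density of smooth functions supported away from the diagonals (which is precisely the content of the cited result on the coincidence of minimal and maximal realizations of $\sD_\alpha$). The pointwise computation, the limit $\eps\to 0$, and the lower-semicontinuity step are all sound.

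One small remark: your closing paragraph attributes the vanishing of the cross term $\bar\Psi\,\alpha\bA_j\Psi$ under $\im(\cdot)$ to ``the specific algebraic structure of the Aharonov--Bohm coupling''. In fact this holds for \emph{any} real-valued magnetic vector potential; the genuine Aharonov--Bohm-specific issue is only the failure of $\bA_j\in L^2_{\loc}$, which you correctly handle by density rather than by any special cancellation. This does not affect the validity of your proof.
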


	Next we consider a certain analog of Lemma~\ref{lem:subadd} for the Neumann energy, where subadditivity becomes superadditivity.
	
\begin{lemma}[Superadditivity]\label{lem:superadd}
	For $K\geq 2$, let $\{\Omega_q\}_{q=1}^K$ be a collection of disjoint, 
	simply connected
	subsets of $\R^2$. For $\vec n \in \N_0^K$ with $\sum_{q} n_q = N$, 
	let $\1_{\vec n}$ denote the characteristic function of the subset of 
	$\R^{2N}$ where exactly $n_q$ of the points $\{\bx_1,\ldots,\bx_N\}$ 
	are in $\Omega_q$, for all $1\leq q\leq K$. Let 
	\begin{equation}\label{def:W}
		W(\bx_1,\dots,\bx_N) := \sum_{\vec n} \sum_{q=1}^K E^\eN_{n_q}(\alpha; \Omega_q) \1_{\vec n}(\bx_1,\ldots,\bx_N).
	\end{equation}
	With $\Omega:= \cup_q \Omega_q$, we have 
	\begin{equation}\label{eq:bound-splitting-potential}
		\sum_{j=1}^N \int_{\Omega^N} |D_j\Psi|^2 \ge \int_{\Omega^N} W|\Psi|^2
	\end{equation}
	for any $\Psi \in \domD^N_\alpha$. 
	In particular, 
	\begin{equation}\label{eq:superadd}
	E^\eN_N(\alpha;\Omega) \geq \min_{\vec n} \sum_{q=1}^K E^\eN_{n_q}(\alpha; \Omega_q).
	\end{equation}
\end{lemma}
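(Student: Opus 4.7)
The strategy is to partition $\Omega^N$ into configuration cells according to which subdomain each particle occupies, perform a local gauge transformation on each cell that decouples the statistical interactions between particles lying in distinct subdomains, and then apply the variational definition of the Neumann ground-state energy box by box. Concretely, for each map $\sigma:\{1,\ldots,N\}\to\{1,\ldots,K\}$ let $A_\sigma:=\prod_{j=1}^N \Omega_{\sigma(j)}$; these cells partition $\Omega^N$ up to boundaries of measure zero, and the indicator appearing in \eqref{def:W} satisfies $\1_{\vec n} = \sum_{\sigma:\vec n(\sigma)=\vec n}\1_{A_\sigma}$ with $\vec n(\sigma) := (|\sigma^{-1}(1)|,\ldots,|\sigma^{-1}(K)|)$.

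On a fixed cell $A_\sigma$, any inter-box vector potential $(\bx_j-\bx_k)^{-\perp}$ with $\sigma(j)\neq \sigma(k)$ is smooth and can be written as $\nabla_{\bx_j}\phi_{jk}$ with $\phi_{jk}(\bx_j,\bx_k):=\arg(z_j-z_k)$ globally well defined: this is the standard monodromy lifting of the continuous map $\Omega_{\sigma(j)}\times\Omega_{\sigma(k)}\to S^1$, valid because a product of two simply connected sets is simply connected. With $J(\sigma):=\{(j,k):j<k,\ \sigma(j)\ne\sigma(k)\}$ and $\Theta_\sigma:=\sum_{(j,k)\in J(\sigma)}\phi_{jk}$, the gauge transformation $\Phi_\sigma := e^{i\alpha\Theta_\sigma}\Psi$ on $A_\sigma$ yields $D_j\Psi = e^{-i\alpha\Theta_\sigma} D_j^{(\sigma)}\Phi_\sigma$ for all $j$, where
$$D_j^{(\sigma)} := -i\nabla_{\bx_j} + \alpha\sum_{\substack{k\neq j\\ \sigma(k)=\sigma(j)}}(\bx_j-\bx_k)^{-\perp}$$
involves only intra-box pairs. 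Since $\Theta_\sigma$ is invariant under permutations of coordinates within a single box, $\Phi_\sigma$ inherits the within-box permutation symmetry of the fully symmetric $\Psi$.

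Fixing a box index $q$ and writing $I_q:=\sigma^{-1}(q)$, $n_q:=|I_q|$, for a.e.\ frozen choice of the outside coordinates $(\bx_j)_{j\notin I_q}$ the partial function $\Phi_\sigma$ restricted to $(\bx_j)_{j\in I_q}\in\Omega_q^{n_q}$ is a symmetric element of the anyon form domain for $n_q$ particles in $\Omega_q$, and the cellwise identity $|D_j\Psi|^2=|D_j^{(\sigma)}\Phi_\sigma|^2$ together with the variational definition of $E^\eN_{n_q}(\alpha;\Omega_q)$ yields the lower bound $\sum_{j\in I_q}\int_{\Omega_q^{n_q}}|D_j^{(\sigma)}\Phi_\sigma|^2\ge E^\eN_{n_q}(\alpha;\Omega_q)\int_{\Omega_q^{n_q}}|\Phi_\sigma|^2$ pointwise in the outside coordinates. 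Summing over $q$, integrating over the outside coordinates by Fubini, and using $|\Phi_\sigma|=|\Psi|$, one obtains the cellwise inequality $\int_{A_\sigma}\sum_j|D_j\Psi|^2 \ge \bigl(\sum_q E^\eN_{n_q}(\alpha;\Omega_q)\bigr)\int_{A_\sigma}|\Psi|^2$. Summing over all $\sigma$ and regrouping by $\vec n$ yields \eqref{eq:bound-splitting-potential}, and \eqref{eq:superadd} is immediate from the pointwise bound $W\ge \min_{\vec n}\sum_q E^\eN_{n_q}(\alpha;\Omega_q)$ combined with normalization of $\Psi$ on $\Omega^N$.

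The main obstacle is a careful handling of the gauge transformation: the single-valuedness of each $\phi_{jk}$ rests crucially on the simply connected hypothesis on the $\Omega_q$ (without which inter-box holonomies would obstruct the decoupling), and one needs to verify that the restriction $\Phi_\sigma|_{I_q}$ a.e.\ belongs to the appropriate anyon form domain so as to legitimately invoke the variational definition of $E^\eN_{n_q}$. Both points go through within the magnetic-gauge framework of \cite{LunSol-14}, but they are what makes the topological hypotheses on the subdomains essential.
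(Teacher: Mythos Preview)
Your proposal is correct and follows essentially the same approach as the paper: partition $\Omega^N$ according to which particles lie in which subdomain, gauge away the inter-box Aharonov--Bohm potentials using the simple connectedness of the $\Omega_q$, and then apply the variational definition of $E^\eN_{n_q}(\alpha;\Omega_q)$ box by box with the outside coordinates frozen. The only cosmetic difference is that you perform a single cellwise gauge transformation removing all inter-box pairs at once, whereas the paper removes, for each fixed $q$, only the interactions between $A_q$ and its complement; both lead to the same inequality \eqref{eq:Neumann-bracketing-apriori} and hence to \eqref{eq:bound-splitting-potential}.
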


\begin{proof}
	We start by noting that if $\bx_j \in \Omega$ for all $1\leq j\leq N$, then 
	$1 =  \sum_{\vec n} \1_{\vec n} (\bx_1,\dots,\bx_N)$. 
 	Moreover, for any given $\vec n$, we can further divide the support of $\1_{\vec n}$ into sets corresponding to a labeling of what particles are in what subset. Specifically,  for any $\Psi \in \domD^N_\alpha$
	$$
		\sum_{j=1}^N \int_{\Omega^N} |D_j\Psi|^2 \,d\sx = \sum_{\{A_k\}} \sum_{q=1}^K 
			\int_{(\Omega\setminus \Omega_q )^{N - |A_q|}}
			\sum_{j \in A_q} \int_{\Omega_q^{|A_q|}} |D_j \Psi|^2 
			\,d\sx_{A_q} \,d\sx_{A_q^c}\,,
	$$
	where the sum runs over all partitions of the particles into the sets $\Omega_q$,  
	i.e., over collections of disjoint subsets $A_k \subseteq \{1,2,\ldots,N\}$ 
	such that $|A_1| + \ldots + |A_K| = N$. We have introduced the notation $d\sx_A = \prod_{j\in A} d\bx_j$. 
	Note that, for given $q$, all the particles with labels in $A_q$ are located in $\Omega_q$, while the others 
	are located in $\Omega\setminus \Omega_q$. The interaction of particles inside and outside $\Omega_q$ can then be gauged away, as in the proof of Lemma~\ref{lem:subadd}, 
	explicitly by writing
	$\tilde\Psi = \prod_{j \in A_q, k \in A_q^c} e^{i\alpha\phi_{jk}} \Psi$, with $\phi_{jk}$ defined in \eqref{def:phi}:
	\begin{equation}\label{eq:Neumann-bracketing-apriori}
		\sum_{j\in A_q} \int_{\Omega_q^{|A_q|}} |D_j \Psi|^2 \,d\sx_{A_q} 
		= \sum_{j\in A_q} \int_{\Omega_q^{|A_q|}} |D_j' \tilde\Psi|^2 \,d\sx_{A_q} 
		\ge E^\eN_{|A_q|}(\alpha;\Omega_q) \int_{\Omega_q ^{|A_q|}} |\tilde\Psi|^2 \,d\sx_{A_q},
	\end{equation}
	where $D_j' = -i\nabla_j + \alpha\sum_{k \in A_q, \, k \neq j} (\bx_j-\bx_k)^{-\perp}$.
	Since $|\tilde\Psi|=|\Psi|$ we thus arrive at the desired lower bound \eqref{eq:bound-splitting-potential}.
\end{proof}

With the aid of the previous two lemmas, we can obtain the following bound, which is an adaptation of 
 \cite[Proposition~2]{Lundholm-13}. 
 
\begin{lemma}[A priori bounds in terms of $E_2^\eN(\alpha)$]\label{lem:apriori}
	For any $0 \le \alpha \le 1$ and $N \ge 3$ we have
	\begin{equation}\label{eq:apriori-limit}
	E_N^\eN(\alpha) \geq \frac{ \pi^2   \binom{N}{2} \left(\frac{3}{4}\right)^{N-2} E_2^\eN(\alpha)}{ \left( \pi + 4 \sqrt{E_2^\eN(\alpha)} \right)^2 + E_2^\eN(\alpha)  	\binom{N}{2} \left( \frac 3 4 \right)^{N-2} }
	\end{equation}
\end{lemma}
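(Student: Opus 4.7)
The plan is to combine the superadditivity principle (Lemma~\ref{lem:superadd}), the diamagnetic inequality (Lemma~\ref{lem:diamagnetic}), and the Neumann Poincar\'e inequality on $Q_0$ to derive two complementary lower bounds on the kinetic form, which can then be balanced via a Young-type inequality to give the harmonic-mean expression in \eqref{eq:apriori-limit}. The first bound, which is the one that introduces $E_2^\eN(\alpha)$, is obtained by applying superadditivity to randomized sub-squares. For any sub-square $Q\subset Q_0$ of area $1/4$, the two-piece partition $\{Q,\,Q_0\setminus Q\}$ together with the scaling \eqref{eq:scaling} and the trivial monotonicity $E_n^\eN(\alpha)\ge E_2^\eN(\alpha)$ for $n\ge 2$ yields
\[
\sum_{j=1}^N \int |D_j\Psi|^2 \ \geq\ 4 E_2^\eN(\alpha)\int \mathbf{1}\bigl[\#\{i:\bx_i\in Q\}\ge 2\bigr]|\Psi|^2.
\]
Averaging $Q$ over a translation-invariant family of side-$1/2$ sub-squares, and expanding the ``$\ge 2$-in-$Q$'' indicator in a pair-counting form, for each unordered pair $(j,k)$ the averaged weight of the event ``$\bx_j,\bx_k\in Q$ and $\bx_i\notin Q$ for $i\ne j,k$'' picks up a factor $(3/4)^{N-2}$, each of the $N-2$ extraneous particles lying outside the random sub-square of area $1/4$ with probability at least $3/4$. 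Summing over the $\binom{N}{2}$ pairs then extracts the combinatorial weight $\binom{N}{2}(3/4)^{N-2}$ appearing in the numerator of \eqref{eq:apriori-limit}.

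Complementing this, by Lemma~\ref{lem:diamagnetic} and the first positive Neumann eigenvalue $\pi^2$ of $-\Delta$ on $Q_0$, one obtains a purely kinematic Poincar\'e-type estimate
\[
\sum_j \int |D_j\Psi|^2 \ \geq\ \pi^2 \sum_j \int \bigl(|\Psi| - \langle |\Psi|\rangle_j\bigr)^2,
\]
where $\langle |\Psi|\rangle_j$ denotes the conditional average of $|\Psi|$ over the $j$-th coordinate. This second bound is strong precisely in the regime where $|\Psi|^2$ is far from uniform along some coordinate---the regime in which the first bound is weak, and conversely.

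Writing the kinetic form as a convex combination $\theta\cdot(\text{bound 1})+(1-\theta)\cdot(\text{bound 2})$ for $\theta\in[0,1]$ and applying a Young-type inequality to the cross term (which naturally produces the product $\sqrt{\pi^2}\cdot\sqrt{4E_2^\eN(\alpha)}=2\pi\sqrt{E_2^\eN(\alpha)}$), one arrives after optimization over $\theta$ at the denominator $(\pi+4\sqrt{E_2^\eN(\alpha)})^2$, with the factor $4$ reflecting the area scaling of the side-$1/2$ sub-squares. A direct rearrangement then yields the harmonic-mean form in \eqref{eq:apriori-limit}.

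The main obstacle is the averaging in the first step: one must select the translation-invariant distribution on sub-squares and perform the bookkeeping of the pair-isolation event so that the exact combinatorial weight $\binom{N}{2}(3/4)^{N-2}$ is recovered without loss of constants, and in a form compatible with the clean coupling to the Poincar\'e bound. The $(3/4)^{N-2}$ exponent in particular requires the ``$N-2$ extraneous points each excluded'' probabilities to effectively multiply, which is not automatic since the corresponding events are not independent; handling this cleanly (e.g.\ via an inclusion--exclusion or a product-measure comparison) is the technical heart of the argument. The Young-type optimization in the last step is then a routine algebraic manipulation.
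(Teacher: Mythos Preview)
There are two genuine gaps. First, the ``trivial monotonicity'' $E_n^\eN(\alpha)\ge E_2^\eN(\alpha)$ for $n\ge 2$ that you invoke in the first step is neither trivial nor proved anywhere in the paper; obtaining \emph{any} positive lower bound on $E_n^\eN(\alpha)$ for general $n$ is precisely the content of the lemma you are trying to prove (and of Lemma~\ref{lem:linear} afterwards). If that monotonicity were available, the lemma would follow immediately and in a much stronger form. Second, the two-piece partition $\{Q,\,Q_0\setminus Q\}$ cannot be fed into Lemma~\ref{lem:superadd}, which requires each piece to be simply connected: for a side-$1/2$ sub-square $Q$ with bottom-left corner in the open set $(0,1/2)^2$, the complement $Q_0\setminus Q$ is an annular region and the gauge-away argument underlying superadditivity breaks down. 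Your averaging scheme and the heuristic derivation of the factor $(3/4)^{N-2}$ therefore do not get off the ground.

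The paper avoids both problems by using a \emph{fixed} partition of $Q_0$ into four congruent (simply connected) squares and, crucially, by retaining from the superadditivity potential $W$ only those summands with $n_q=2$ \emph{exactly} (not $n_q\ge 2$), so that only $E_2^\eN(\alpha)$ itself enters and no monotonicity is needed. The resulting potential $W_2$ is bounded, $0\le W_2\le 16\,E_2^\eN(\alpha)$, and its Lebesgue average $\int_{Q_0^N}W_2=E_2^\eN(\alpha)\binom{N}{2}(3/4)^{N-2}$ is exact because Lebesgue measure on $Q_0^N$ is a genuine product measure. The coupling to the Poincar\'e input is then carried out via the projection $P_0$ onto the constant function on $Q_0^N$ (not coordinate-wise): one uses $-\Delta^\eN_{Q_0^N}\ge\pi^2 P_0^\perp$, $P_0W_2P_0=(\int W_2)P_0$, the Cauchy--Schwarz estimate $W_2\ge(1-\eps)P_0W_2P_0+(1-\eps^{-1})P_0^\perp W_2P_0^\perp$, and the $L^\infty$ bound on $W_2$. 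Optimizing the resulting scalar inequality over $\kappa$ and $\eps$ yields \eqref{eq:apriori-limit} directly; no randomization or inclusion--exclusion is required.
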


\begin{proof}
	Let us split $Q_0$ into four equally large squares,
	$Q_0 = Q_1 \sqcup Q_2 \sqcup Q_3 \sqcup Q_4$. Lemma~\ref{lem:superadd} implies that 
	\begin{equation}\label{eq:bound-splitting-potential2}
		\sum_{j=1}^N \int_{Q_0^N} |D_j\Psi|^2 \ge \int_{Q_0^N} W|\Psi|^2
	\end{equation}
	with $W$ defined in \eqref{def:W} (with $K=4$ and $\Omega_q = Q_q$ for $1\leq q\leq 4$). 
		If we keep only the terms in \eqref{def:W} where $n_q=2$, we obtain the lower bound 
	\begin{equation}\label{eq:splitting-potential-2}
		W(\bx_1,\ldots,\bx_N) \ge W_2(\bx_1,\ldots,\bx_N) := 4E_2^\eN(\alpha) \sum_{\vec n } \sum_{q=1}^4 (n_q=2) 
			\,\1_{\vec n}(\bx_1,\ldots,\bx_N),
	\end{equation}
	where we have introduced the convenient notation
	$(P)=1$ if the statement $P$ is true and $(P)=0$ otherwise, 
	and used the scaling property 
	$E_2^\eN(\alpha;Q_q) = 4 E_2^\eN(\alpha)$ for $1\leq q\leq 4$. 
	The average value of $W_2$ can be 
	computed to be
	$$
		 \int_{Q_0^N} W_2 
		=  E_2^\eN(\alpha)  \binom{N}{2}  \left(\frac 34\right)^{N-2}
	$$
	by counting the probability that exactly two particles are in a given square. 
	
	In order to estimate the expectation value of the potential $W_2$ in a ground state $\Psi$,   
	we borrow a bit of  kinetic energy and use the diamagnetic inequality of Lemma~\ref{lem:diamagnetic}. 
	That is, for arbitrary $\kappa \in [0,1]$ we write
	$$
		\hT_\alpha^{Q_0,\eN} = \kappa\hT_\alpha^{Q_0,\eN} + (1-\kappa)\hT^{Q_0,\eN}_\alpha 
		\ge \kappa\hT_\alpha^{Q_0,\eN}+ (1-\kappa)W_2\, .
	$$
	The diamagnetic inequality then  implies that 
	$$
	E_N^\eN(\alpha)  = \infspec 	\hT_\alpha^{Q_0,\eN} \geq \infspec \left[  -\kappa \Delta_{Q_0^N}^\eN + (1-\kappa)W_2 \right],
	$$
	with $\Delta_{Q_0^N}^\eN$ denoting the Neumann Laplacian on $Q_0^N$. 
	Consider the projection $P_0 := u_0\langle u_0,\slot\rangle$
	onto its normalized ground state $u_0 \equiv 1$,
	and the orthogonal complement $P_0^\perp = \1 - P_0$,
	for which we have
	$$
		-\Delta_{Q_0^N}^\eN \ge \pi^2 P_0^\perp\,.
	$$
	Since $W_2\geq 0$, the Cauchy-Schwarz inequality implies that
	$$
		W_2 = (P_0 + P_0^\perp)W_2(P_0 + P_0^\perp) 
		\ge (1-\eps)P_0 W_2 P_0 + (1-\eps^{-1})P_0^\perp W_2 P_0^\perp,
	$$
	for arbitrary $\eps\in (0,1)$. We have
	$$
		P_0 W_2 P_0  = P_0 \int_{Q_0^N} W_2 \,.
	$$
	By using also the simple bound
	$$
		P_0^\perp W_2 P_0^\perp  \le P_0^\perp \|W_2\|_\infty \le P_0^\perp \,2^4 E_2^\eN(\alpha),
	$$
	we obtain
	\begin{align*}
		 -\kappa \Delta_{Q_0^N}^\eN + (1-\kappa)W_2  &  \ge \left( \kappa\pi^2 - (1-\kappa)(\eps^{-1}-1)2^4 E_2^\eN(\alpha) \right) P_0^\perp 
			\\ & \quad + (1-\kappa)(1-\eps) \binom{N}{2} \left(\frac{3}{4}\right)^{N-2} E_2^\eN(\alpha)\, P_0.
	\end{align*}
	
	The optimal choice of $\kappa$ is to make the prefactors in front of the two projections on the right side equal, i.e., 
	$$
	\kappa =  \frac{ 2^4 (\eps^{-1}-1) E_2^\eN(\alpha)\left[ 1 + \eps \binom{N}{2} \frac{ 3^{N-2}}{4^N}\right]} { \pi^2 + 2^4 (\eps^{-1}-1) E_2^\eN(\alpha) \left[ 1 + \eps \binom{N}{2} \frac{ 3^{N-2}}{4^N}\right] }.
	$$
	This choice leads to the bound
	$$
	E_N^\eN(\alpha) \geq \frac{ \pi^2  (1-\eps) \binom{N}{2} \left(\frac{3}{4}\right)^{N-2} E_2^\eN(\alpha)}{ \pi^2 + 2^4 (\eps^{-1}-1) E_2^\eN(\alpha) \left[ 1 + \eps \binom{N}{2} \frac{ 3^{N-2}}{4^N}\right] }.
	$$
	Optimizing over $0<\eps<1$ then yields the claimed bound.
\end{proof}

\begin{remark}
Lemma~\ref{lem:apriori} implies, in particular, that $E_N^\eN(\alpha)$ is 
bounded below by a strictly positive, $N$-dependent constant times 
$E_2^\eN(\alpha)$. 
In fact, by localizing the two particles in different halves of the unit square $Q_0$ 
(following the proof of Lemma~\ref{lem:subadd}),
one readily checks that $E_2^\eN(\alpha) \leq 2 \pi^2$ independently of $\alpha$. 
Using this in the denominator in \eqref{eq:apriori-limit} leads to the simpler (but worse) bound
\begin{equation}\label{eq:apriori-uniform}
	E_N^\eN(\alpha) \geq \frac{   \binom{N}{2} \left(\frac{3}{4}\right)^{N-2} }{ \left( 1+ 4 \sqrt{2} \right)^2 + 2   	\binom{N}{2} \left( \frac 3 4 \right)^{N-2} } E_2^\eN(\alpha).
\end{equation}
Note that while this gives a non-zero bound for all $N\geq 2$, the constant appearing on the right side is exponentially small as $N\to \infty$. 
Moreover, from \eqref{eq:upperN} we deduce that $E_2^\eN(\alpha)\leq 4\pi \alpha (1 + O(\alpha))$ for small $\alpha$, hence \eqref{eq:apriori-limit} implies that
\begin{equation}\label{eq:apriori-limit-small}
	E_N^\eN(\alpha) \geq  \binom{N}{2} \left(\frac{3}{4}\right)^{N-2} E_2^\eN(\alpha) \bigl( 1 - O(\sqrt{\alpha}) \bigr)
	\end{equation}
	for small $\alpha$. 
\end{remark}

As a final step in this subsection, we shall give a lower bound on $E_2^\eN(\alpha)$. 
The following bound is actually contained in  \cite[Lemma~5.3]{LarLun-16}. 

\begin{lemma}[Lower bound on $E^\eN_2(\alpha)$]\label{lem:n2} 
	For  $\nu>0$ let $j_\nu'$ denote the first positive zero of the derivative 
	of the Bessel function $J_\nu$, 
	satisfying
	\begin{equation}\label{eq:jprime-bounds}
		\sqrt{2\nu} \leq j_\nu' \leq \sqrt{2\nu(1+\nu)},
	\end{equation}
	and $j_0' := 0$ for continuity. There exists a function $f\colon [0,(j_1')^2] \to \R_+$ 
		satisfying 
	\begin{equation}\label{eq:f-props}
		t/6 \le f(t) \le 2\pi t 
		\qquad \text{and} \qquad
		f(t) = 2\pi t \bigl(1 - O(t^{1/3})\bigr) \quad \text{as $t\to 0$,}
	\end{equation}
	such that 
	$$
		E^\eN_2(\alpha) \ge f\bigl( (j'_{\alpha})^2 \bigr) 
	$$
	holds for any $0\leq \alpha\leq 1$.
\end{lemma}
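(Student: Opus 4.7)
My strategy is to reduce the two-anyon Neumann problem on $Q_0^2$ to a one-dimensional radial Bessel eigenvalue problem, from which the function $f$ will emerge as a variational infimum. First, I would pass to relative and center-of-mass coordinates $\br := \bx_1 - \bx_2$ and $\bR := (\bx_1+\bx_2)/2$, under which a direct computation yields
\[
\sum_{j=1}^2 |D_j \Psi|^2 = \tfrac12 |\nabla_\bR \Psi|^2 + 2\,|(-i\nabla_\br + \alpha \br^{-\perp})\Psi|^2,
\]
while the permutation symmetry of $\Psi$ becomes invariance under $\br \to -\br$. Discarding the nonnegative CoM gradient leaves a pure relative problem to analyze.

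Next, I would expand $\Psi$ in the angle $\theta$ of $\br$. By parity only even Fourier modes $m \in 2\Z$ appear, and in each sector the magnetic quadratic form contributes a centrifugal coefficient $(m+\alpha)^2/r^2$, which for $0 \leq \alpha \leq 1$ is minimized at $m=0$ with value $\alpha^2/r^2$. After angular integration this gives the pointwise lower bound $2|(-i\nabla_\br + \alpha\br^{-\perp})\Psi|^2 \geq 2|\partial_r \Psi|^2 + 2\alpha^2 r^{-2}|\Psi|^2$, whose right-hand side is the radial quadratic form of the Bessel operator $L_\alpha := -\partial_r^2 - r^{-1}\partial_r + \alpha^2/r^2$. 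On a disk $D_R$ with Neumann BC at $r=R$ the lowest eigenvalue of $L_\alpha$ is $(j_\alpha'/R)^2$, realized by $J_\alpha(j_\alpha' r/R)$. A localization/rearrangement argument replacing the $\br$-range in $Q_0^2$ by a disk of $O(1)$ radius then produces $E_2^\eN(\alpha) \geq f((j_\alpha')^2)$, with $f(t)$ defined as the associated variational infimum, naturally parameterized by $t = (j_\alpha')^2$.

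To verify the three properties of $f$, I would argue as follows: (i) the upper bound $f(t) \leq 2\pi t$ by testing with the Bessel eigenfunction on an optimally sized disk and matching $E_2^\eN(\alpha) \leq 4\pi\alpha(1 + O(\alpha))$ from Lemma~\ref{lem:upper-dyson}; (ii) the uniform estimate $f(t) \geq t/6$ by a coarse Hardy-type inequality combined with the elementary bound $j_\alpha' \geq \sqrt{2\alpha}$; (iii) the sharp asymptotic $f(t) = 2\pi t(1 - O(t^{1/3}))$ by a perturbative analysis of $J_\alpha$ and $j_\alpha'$ around $\alpha = 0$ together with a careful optimization of the disk radius. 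The main obstacle is recovering the precise prefactor $2\pi$ in this small-$t$ asymptotic, since the square geometry does not align perfectly with the idealized disk and the resulting geometric mismatch must be controlled to within $O(t^{1/3})$. This sharpness is the delicate issue, as it is what ultimately produces the linear-in-$\alpha$ behavior of $E_N^\eN(\alpha)$ through Lemma~\ref{lem:apriori}; any weaker prefactor would degrade the bound \eqref{eq:main-bound-asymptotic} in Theorem~\ref{thm:main}.
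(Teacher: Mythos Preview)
Your reduction to relative coordinates and the angular decomposition identifying the Bessel operator $L_\alpha$ with Neumann ground state $(j'_\alpha/R)^2$ on a disk $B_R$ is correct and is indeed the source of the parameter $t=(j'_\alpha)^2$. However, the step ``a localization/rearrangement argument replacing the $\br$-range in $Q_0^2$ by a disk of $O(1)$ radius'' hides the real difficulty and, as written, does not work. The angular decomposition requires integrating over full circles and is therefore only valid on a disk centered at $\br=0$; the actual $\br$-domain (for fixed $\bX$) is a rectangle, and Neumann eigenvalues have no domain monotonicity that would let you pass to a circumscribed or inscribed disk. What one \emph{can} do is restrict the kinetic-energy integral to the inscribed disk $B_{\delta(\bX)}$ (where $\delta(\bX)=\dist(\bX,\partial Q_0)$), on which the Bessel bound yields
\[
\int_{B_{\delta(\bX)}} 2|(-i\nabla_\br + \alpha\br^{-\perp})\Psi|^2\,d\br
\ \ge\ \frac{(j'_\alpha)^2}{\delta(\bX)^2}\int_{B_{\delta(\bX)}}|\Psi|^2\,d\br.
\]
This is an effective \emph{potential} supported on the $\bX$-dependent ball, not an eigenvalue bound: nothing prevents $\Psi$ from concentrating on $|\br|>\delta(\bX)$, where the right-hand side vanishes. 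Since you have already discarded the center-of-mass kinetic energy, there is no mechanism left to rule this out, and the resulting lower bound collapses to zero.

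The paper does not give its own proof but records the definition of $f$ taken from \cite[Lemma~5.3]{LarLun-16}, and that definition shows exactly how this gap is closed. One splits the full kinetic energy with a parameter $\kappa\in(0,1)$: a fraction $(1-\kappa)$ is spent on the Bessel bound above, producing the potential $t\,\1_{B_{\delta(\bX)}}(\br)/\delta(\bX)^2$, while the remaining fraction $\kappa$ is kept in the form $\sum_j|\nabla_j|\Psi||^2$ via the diamagnetic inequality. This retained kinetic energy --- which includes the center-of-mass gradient you discarded --- is precisely what forces $|\Psi|^2$ to overlap with the support of the potential and makes the infimum defining $f(t)$ strictly positive. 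The properties \eqref{eq:f-props}, including the delicate $2\pi t(1-O(t^{1/3}))$ asymptotics, are then extracted from this two-body Schr\"odinger problem on $Q_0^2$, optimizing over $\kappa$. Your plan for verifying (i)--(iii) is reasonable in spirit, but it must be carried out for this $f$, not for a bare disk eigenvalue.
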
	
	
In fact, the function $f$ in Lemma~\ref{lem:n2} is defined as 	
	$$
		f(t) := \frac{1}{2} \sup_{\kappa \in (0, 1)} 
		\inf_{\int_{Q_0^2}|\psi|^2=1} \int_{Q_0^2}
		\biggl(\kappa\bigl(|{\nabla_1|\psi|}\bigr|^2 
			+ \bigl|{\nabla_2|\psi|}\bigr|^2\bigr) 
		+ (1-\kappa) t \frac{\1_{B_{\delta(\bX)}}(\br)}{\delta(\bX)^2}|\psi|^2\biggr) 
		\,d\bx_1 d\bx_2,
	$$
	where $B_r$ denotes the ball of radius $r$ centered at the origin, and 
	$$
		\br = (\bx_1-\bx_2)/2, \qquad \bX = (\bx_1+\bx_2)/2, 
		\qquad \delta(\bx) := \dist(\bx,\partial Q_0)\,.
	$$

Note that in combination with the upper bound \eqref{eq:upperN}
of Lemma~\ref{lem:upper-dyson}, Lemma~\ref{lem:n2} determines the two-particle energy for small $\alpha$:

\begin{proposition}\label{prop:E_2-asymptotics}
	For the 2-anyon Neumann energy 
	\begin{equation}\label{eq:E_2-asymptotics}
		E^\eN_2(\alpha) = 4\pi\alpha \bigl(1 + O(\alpha^{1/3})\bigr)
		\qquad \text{as} \ \ \alpha \to 0.
	\end{equation}
\end{proposition}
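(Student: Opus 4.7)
The plan is to obtain matching upper and lower bounds, each of the form $4\pi\alpha$ plus a lower-order error, and simply sandwich them together. Neither direction requires new work beyond what has already been assembled in Lemmas~\ref{lem:upper-dyson} and~\ref{lem:n2}; the only task is to read off the leading-order behavior and verify that the errors are $O(\alpha^{1/3})$.

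For the upper bound I would specialize \eqref{eq:upperN} to $N=2$, for which the three-body contribution (the factor $\tfrac{20}{3}\pi(N-2)\alpha$) vanishes identically. This gives
\begin{equation*}
E_2^\eN(\alpha) \;\le\; \frac{4\pi\alpha}{(1-4\pi\alpha)^2} \;=\; 4\pi\alpha\,\bigl(1 + O(\alpha)\bigr),
\end{equation*}
valid for $\alpha < 1/(4\pi)$. Since $O(\alpha) \subset O(\alpha^{1/3})$, this is of the claimed form.

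For the lower bound I would invoke Lemma~\ref{lem:n2}, which gives $E_2^\eN(\alpha) \ge f\bigl((j'_\alpha)^2\bigr)$. The small-argument asymptotics $f(t) = 2\pi t\bigl(1-O(t^{1/3})\bigr)$ from \eqref{eq:f-props} then need to be combined with the Bessel zero bounds \eqref{eq:jprime-bounds}, which yield
\begin{equation*}
2\alpha \;\le\; (j'_\alpha)^2 \;\le\; 2\alpha(1+\alpha).
\end{equation*}
Thus as $\alpha \to 0$ we have $(j'_\alpha)^2 = 2\alpha\bigl(1+O(\alpha)\bigr)$, and in particular $(j'_\alpha)^2 \to 0$, so the asymptotic form of $f$ applies. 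Substituting,
\begin{equation*}
E_2^\eN(\alpha) \;\ge\; 2\pi (j'_\alpha)^2\,\bigl(1 - O((j'_\alpha)^{2/3})\bigr) \;\ge\; 4\pi\alpha\,\bigl(1 - O(\alpha^{1/3})\bigr),
\end{equation*}
where the $\alpha^{1/3}$ error dominates the $\alpha$ error coming from the upper Bessel bound.

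Combining the two estimates gives \eqref{eq:E_2-asymptotics}. The main (and only) subtlety to watch is the size of the error term: the lower bound contributes $O(\alpha^{1/3})$ while the upper bound contributes only $O(\alpha)$, so the weaker lower-bound error dictates the final rate. No obstacle beyond bookkeeping is anticipated.
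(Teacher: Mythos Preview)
Your proposal is correct and follows exactly the approach the paper indicates: the proposition is stated immediately after Lemma~\ref{lem:n2} with the remark that it follows by combining that lemma with the upper bound~\eqref{eq:upperN} of Lemma~\ref{lem:upper-dyson}, and you have carried out precisely this combination with the correct bookkeeping of error terms.
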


\begin{remark}
	The bound in \cite[Lemma~5.3]{LarLun-16} is actually more general than what 
	is stated here. It 
	gives a lower bound, for any $N\geq 2$, in 
	terms of the `fractionality' of $\alpha$ \cite[Proposition~5]{LunSol-13a} 
	defined as
	\begin{equation}\label{eq:alpha_N}
		\alpha_N 
		:= \min\limits_{p \in \{0, 1, \ldots, N-2\}} \min\limits_{q \in \Z} |(2p+1)\alpha - 2q|, 
		\qquad
		\alpha_* = \inf_{N \ge 2} \alpha_N = \lim_{N \to \infty} \alpha_{N}.
	\end{equation}
	Note that $\alpha_2 = \alpha$ for $0\leq \alpha\leq 1$. 
	One has, in fact, for any $\alpha \in \R$ and $N \ge 1$ the bound
	$$
		E^\eN_N(\alpha) \ge f\bigl( (j'_{\alpha_N})^2 \bigr) (N-1)_+ \,.
	$$
	For $\alpha_*>0$, the right side grows linearly in $N$.
\end{remark}

\subsection{New bounds}\label{sec:new-bounds}

	Our improved lower bounds are due to the following lemma,
	which utilizes the scale invariance of the problem:

\begin{lemma}[$N$-linear bound in terms of few-particle energies]
	\label{lem:linear}
	For any $0 \le \alpha \le 1$ 
	and $N \ge 2$ we have
	\begin{equation}\label{eq:split}
		E_N^\eN(\alpha) \ge \frac{N}{4} \min\bigl\{ E_2^\eN(\alpha), E_3^\eN(\alpha), E_4^\eN(\alpha) \bigr\}.
	\end{equation}
\end{lemma}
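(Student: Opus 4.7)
The plan is to proceed by strong induction on $N$, using the superadditivity Lemma~\ref{lem:superadd} on a partition of $Q_0$ into four equal sub-squares. Write $E_\ast := \min\{E_2^\eN(\alpha), E_3^\eN(\alpha), E_4^\eN(\alpha)\}$. When $\alpha = 0$ the bound is trivial since $E_\ast = 0$, so I focus on $\alpha > 0$. For $N \in \{2,3,4\}$ the inequality $E_N^\eN(\alpha) \ge E_\ast \ge (N/4)E_\ast$ is immediate from the definition of $E_\ast$ together with $N/4 \le 1$, and these serve as the base cases; note that none of $N=2,3,4$ can be reached by the four-box splitting below, since a spread configuration such as $(1,1,1,1)$ makes the resulting minimum vanish.

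For the inductive step I assume $N \ge 5$ and $E_m^\eN(\alpha) \ge (m/4) E_\ast$ for all $2 \le m < N$. I would partition $Q_0$ into four equal sub-squares $Q_1,\dots,Q_4$ of area $1/4$ and combine Lemma~\ref{lem:superadd} with the scaling property~\eqref{eq:scaling} to obtain
$$E_N^\eN(\alpha) \ge 4 \min_{\vec n} \sum_{q=1}^{4} E_{n_q}^\eN(\alpha),$$
where $\vec n \in \N_0^4$ ranges over compositions of $N$.

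To bound this minimum I would split compositions into two categories. For any $\vec n$ whose entries are all at most $N-1$, using $E_0^\eN = E_1^\eN = 0$ together with the induction hypothesis yields
$$\sum_{q=1}^{4} E_{n_q}^\eN(\alpha) \ge \frac{E_\ast}{4}\sum_{q\,:\,n_q\ge 2} n_q \ge \frac{E_\ast}{4}(N-3),$$
since at most three of the sub-squares can contain exactly one particle when $N \ge 5$ (four ones would force $N = 4$). The only remaining composition is the concentrated one $\vec n = (N,0,0,0)$ up to permutation, for which $\sum_q E_{n_q}^\eN(\alpha) = E_N^\eN(\alpha)$. Hence the minimum is bounded below by $\min\{(N-3)E_\ast/4,\ E_N^\eN(\alpha)\}$.

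The main difficulty is the self-referential concentrated composition: if it realises the minimum, the bound collapses into $E_N^\eN(\alpha) \ge 4\, E_N^\eN(\alpha)$, which is compatible only with $E_N^\eN(\alpha) = 0$. To exclude this for $\alpha > 0$, I would invoke Lemma~\ref{lem:apriori}, which combined with $E_2^\eN(\alpha) > 0$ from Lemma~\ref{lem:n2} ensures $E_N^\eN(\alpha) > 0$ strictly. Positivity rules out the degenerate alternative and leaves $E_N^\eN(\alpha) \ge (N-3) E_\ast$, and since $N-3 \ge N/4$ for $N \ge 4$ this closes the induction.
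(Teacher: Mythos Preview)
Your proof is correct and follows essentially the same approach as the paper: split $Q_0$ into four equal squares via Lemma~\ref{lem:superadd}, use scaling, and invoke strict positivity (Lemmas~\ref{lem:apriori} and~\ref{lem:n2}) to rule out the self-referential case. The only difference is organizational: the paper keeps just the single box with at least $\lceil N/4\rceil$ particles and iterates along the dyadic scale $e_k=\min_{4^k<n\le 4^{k+1}}E_n^\eN(\alpha)$, whereas you retain all four boxes and run a direct strong induction, which in fact yields the slightly sharper intermediate bound $E_N^\eN(\alpha)\ge (N-3)E_\ast$.
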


\begin{proof}
	Without loss of generality we can assume $N\geq 5$, since for 
	$N\in \{2,3,4\}$ the bound \eqref{eq:split} trivially holds. 
	We may also assume $\alpha > 0$, so that $E^\eN_N(\alpha) > 0$ 
	for all $N \ge 2$ by Lemmas~\ref{lem:apriori} and \ref{lem:n2}.
	Let us proceed similarly as in the proof of Lemma~\ref{lem:apriori} and 
	split $Q_0 = Q_1 \sqcup Q_2 \sqcup Q_3 \sqcup Q_4$ 
	into four equally large squares. 
	The bound~\eqref{eq:superadd} together with the scaling property \eqref{eq:scaling} implies
\begin{equation}\label{eq:superadd2}
	E^\eN_N(\alpha) \geq 4 \min_{\vec n}  \sum_{q=1}^4 E^\eN_{n_j}(\alpha). 
	\end{equation}
	For any  partition $\vec n$
	of the $N$ particles into the four squares
	there must be at least one square with at least $N/4$ particles. Dropping the other terms, we thus obtain the recursive bound 
	\begin{equation}\label{eq:E_n-recursion}
		E^\eN_N(\alpha) \ge 4\min_{k=\lceil N/4\rceil, \ldots, N} E^\eN_k(\alpha).
	\end{equation}
	
	Let us define, for $k \ge 0$,
	$$
		e_k := \min_{n = 4^k+1,4^k+2, \ldots, 4^{k+1}} E^\eN_n(\alpha),
	$$
	and observe that by \eqref{eq:E_n-recursion}  
	$$
		e_k \ge 4 \min_{n = 4^k+1, \ldots, 4^{k+1}} 
			\,\min_{p= \lceil n/4\rceil, \ldots, n} E^\eN_{p}(\alpha)
			\ge 4 \min_{n= 4^{k-1} + 1, \ldots, 4^{k+1}} E^\eN_{n}(\alpha)
			= 4 \min \{ e_{k-1}, e_k \}
	$$
	for any $k \ge 1$. 
	Then, since $e_k > 0$ for all $k$, 
	we have
	\begin{equation}\label{eq:e_k-recursion}
		e_k \ge 4 e_{k-1} \ge \ldots \ge 4^k e_0.
	\end{equation}
	Finally, writing any $N \ge 5$ uniquely as $N = 4^k + l$ with $1\leq l \leq 3 \cdot 4^k$, 
	we have $k \ge 1$, $N \le 4^{k+1}$, and
	$$
		E^\eN_N(\alpha) \ge e_k \ge 4^k e_0 \ge \frac{N}{4} e_0,
	$$
	with $e_0 = \min \{E^\eN_2(\alpha),E^\eN_3(\alpha),E^\eN_4(\alpha)\}$. This proves the statement of the lemma.
\end{proof}

The previous lemma gives a lower bound on $E^\eN_N(\alpha)$ that is linear in $N$, at least for $N\geq 2$, for all $\alpha>0$. The following bound 
	(which also appeared in slightly different formulations in the earlier works; see 
	\cite{FraSei-12,LarLun-16,Lundholm-13}) 
	lifts any linear growth in the particle number $N$ to a quadratic one.

\begin{lemma}[Quadratic bounds]\label{lem:quadratic}
	If there exists an integer $k \ge 1$ and a function $c(\alpha) \ge 0$ such that
	$E_N^\eN(\alpha) \ge c(\alpha) (N-k)_+$ for all $N \ge 1$,
	then in fact
	$$
		E_N^\eN(\alpha) \ge c(\alpha) \frac{N^2}{4k} \bigl(1-O(k / N)\bigr)
		\quad \text{as} \quad N \to \infty.
	$$
\end{lemma}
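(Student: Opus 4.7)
The strategy is to boost the assumed linear-in-$N$ bound to a quadratic one by splitting $Q_0$ into a fine grid of subsquares and applying the superadditivity Lemma~\ref{lem:superadd} inside each. Specifically, I would divide $Q_0$ into $M^2$ equal subsquares $\{Q_q\}_{q=1}^{M^2}$ of side $1/M$ for an integer $M$ to be optimized, and combine Lemma~\ref{lem:superadd} with the scaling identity \eqref{eq:scaling}, which yields $E^\eN_n(\alpha;Q_q) = M^2 E^\eN_n(\alpha)$. Inserting the assumed linear bound on each subsquare produces
$$E_N^\eN(\alpha) \ge c(\alpha)\, M^2 \min_{\vec n} \sum_{q=1}^{M^2} (n_q - k)_+,$$
where $\vec n = (n_1,\ldots,n_{M^2}) \in \N_0^{M^2}$ ranges over compositions with $\sum_q n_q = N$.

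Next I would estimate the combinatorial minimum. Since $(n-k)_+ \ge n-k$ for every $n \ge 0$, summation yields $\sum_q (n_q-k)_+ \ge N - kM^2$, and hence $E_N^\eN(\alpha) \ge c(\alpha)\, M^2 (N-kM^2)_+$. Viewed as a function of $M^2 \in \R_+$, the right-hand side is maximized at $M^2 = N/(2k)$ with maximal value $c(\alpha)N^2/(4k)$. Correspondingly I would take $M := \lfloor \sqrt{N/(2k)}\, \rfloor$, which is $\ge 1$ once $N \ge 2k$, an assumption that is harmless since the statement concerns the asymptotics as $N \to \infty$. Writing $M^2 = N/(2k) - \eta$ with $0 \le \eta \le \sqrt{2N/k}$, a direct expansion gives
$$M^2(N-kM^2) = \frac{N^2}{4k} - k\eta^2 \ge \frac{N^2}{4k} - 2N = \frac{N^2}{4k}\bigl(1 - O(k/N)\bigr),$$
which is exactly the claimed bound.

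There is no serious obstacle here: the argument is the same ``localization-plus-superadditivity'' device that already appeared in the proofs of Lemma~\ref{lem:apriori} and Lemma~\ref{lem:linear}, but this time exploiting the scale invariance \eqref{eq:scaling} to let the grid become arbitrarily fine relative to $N$. The only mildly technical point is the integer rounding when choosing $M$, which produces precisely the $O(k/N)$ relative error stated in the lemma.
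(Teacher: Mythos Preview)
Your proposal is correct and follows essentially the same route as the paper: split $Q_0$ into $M^2$ congruent subsquares, combine superadditivity with scaling and the assumed linear bound to get $E_N^\eN(\alpha) \ge c(\alpha)\,M^2(N-kM^2)_+$, then optimize $M$ near $\sqrt{N/(2k)}$. The only cosmetic differences are that the paper works with the probability distribution $p_n(q)$ and invokes Jensen's inequality for $x\mapsto (x-k)_+$ rather than your pointwise bound $(n-k)_+\ge n-k$, and it rounds up rather than down when choosing the grid size; both variants yield the same $c(\alpha)N^2/(4k)\bigl(1-O(k/N)\bigr)$.
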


\begin{proof}
	Given an integer $K\geq 1$, we split $Q_0$ into $K^2$ disjoint and equally large squares $\{Q_q\}_{q=1}^{K^2}$, 
	and associate to any $L^2$-normalized symmetric wave function $\Psi$
	the probabilities
	$$
		p_n(q) := \binom {N}{n} 
			\int_{(Q_q^c)^{N-n}\times  Q_q^n} |\Psi|^2 
	$$
	of finding exactly $n$ particles on a square $Q_q$. 
	Lemma~\ref{lem:superadd} implies that 
		\begin{align*}
		\sum_{j=1}^N \int_{Q_0^N} |D_j\Psi|^2 
		&\ge \sum_{q=1}^{K^2} \sum_{n=0}^N E^\eN_{n}(\alpha;Q_q) \,p_n(q)
		= K^2 \sum_{q=1}^{K^2} \sum_{n=0}^N E^\eN_n(\alpha) \,p_n(q) \\
		&\ge K^2 \sum_{q=1}^{K^2} \sum_{n=0}^N  c(\alpha) (n-k)_+ \,p_n(q) 
		= c(\alpha) K^4 \sum_{n=0}^N (n-k)_+ \gamma_n,
	\end{align*}
	where the average distribution of particle numbers
	$\gamma_n := K^{-2} \sum_{q=1}^{K^2} p_n(q)$
	satisfies
	$$
		\sum_{n=0}^N \gamma_n = 1
		\qquad \text{and} \qquad
		\sum_{n=0}^N n \gamma_n = N/K^2 =: \rho_Q,
	$$
	the expected number of particles on any square.
	Hence, by convexity of $x \mapsto (x-k)_+$,
	\begin{align*}
		\sum_{j=1}^N \int_{Q_0^N} |D_j\Psi|^2
		&\ge c(\alpha) K^4 \left( \sum_{n=0}^N n \gamma_n -k \right)_+
		= c(\alpha) N^2 \rho_Q^{-2} (\rho_Q - k)_+.
	\end{align*}
	In order to maximize the right side, the optimal choice of $K$ would be such as to make $\rho_Q = 2k$, in which case the desired 
	bound would be obtained exactly. However, we have to take into account the constraint that 
	$\rho_Q = N/K^2$ with $K \in \N$.
	Thus, taking $K := \lceil\sqrt{N/(2k)}\rceil$
	we obtain
	$$
		\frac{2k}{(1+ \sqrt{2k/N})^2} \le \rho_Q \le 2k 
	$$
	and
	$$
		E^\eN_N(\alpha) \ge c(\alpha) \frac{ N^2} {4k} 
			\left( 2(1+\sqrt{2k/N})^{2} - (1+\sqrt{2k/N})^{4} \right)_+,
	$$
	which proves the lemma.
\end{proof}

The proof of the lower bounds of Theorem~\ref{thm:main} now follows in a straightforward manner. 
	For any $\alpha \in \R$ and $N \ge 2$ we have by Lemma~\ref{lem:linear}
	\begin{equation}\label{eq:new-lower-bound-c}
		E^\eN_N(\alpha) \ge c(\alpha) N \quad \text{with} \quad
		c(\alpha) := \frac 14 \min\bigl\{ E^\eN_2(\alpha), E^\eN_3(\alpha), E^\eN_4(\alpha) \bigr\}\,.
	\end{equation}
	In particular, 
	\begin{equation}\label{eq:new-lower-bound}
		E^\eN_N(\alpha) \ge c(\alpha) (N-1)_+
	\end{equation}
	for all $N\geq 1$, 
	and therefore by Lemma~\ref{lem:quadratic}
	$$
		E^\eN_N(\alpha) \ge \frac{c(\alpha)} {4} N^2 \bigl(1-O(N^{-1})\bigr)
	$$
	for large $N$. 
	
	From Lemma~\ref{lem:apriori} one can deduce that 
	\begin{equation}\label{eq:calpha}
	c(\alpha) \geq  \frac 14 \min \{ E^\eN_2(\alpha), 0.147\}
	\end{equation}
	where the number $0.147$ is really the positive root of 
	$(\pi+4\sqrt{x})^2 + \frac{9}{4}x = \frac{9}{4}\pi^2$, i.e.
	$$
		x = \pi^2 \frac{ 877 - 96 \sqrt{69} } {5329} \approx 0.147.
	$$
	In combination with Lemma~\ref{lem:n2} and \eqref{eq:E_2-asymptotics}, 
	this concludes the proof.

\subsection{Lieb--Thirring Inequality} 
	Finally, we explain how the above bounds  lead to improvements in the
	local exclusion principle and thus the Lieb--Thirring inequality
	introduced for anyons in \cite{LunSol-13a}.
	Namely, define for any $L^2$-normalized $N$-anyon wave function 
	$\Psi \in \domD_\alpha^N$ and domain $\Omega \subseteq \R^2$
	the local kinetic energy on $\Omega$
	\begin{equation}\label{eq:local-exp-kin-en}
		T_\alpha^\Omega[\Psi] := \sum_{j=1}^N \int_{\R^{dN}} |D_j \Psi|^2 \,\1_\Omega(\bx_j) \,d\bx_1\cdots d\bx_N\,, 
	\end{equation}
	where $\1_\Omega$ denotes the characteristic function of $\Omega$. 
	Applying the bound \eqref{eq:new-lower-bound-c}-\eqref{eq:new-lower-bound}
	as in \cite[Lemma~8]{LunSol-13a} we then obtain the following:
	
\begin{lemma}[Local exclusion principle]
	\label{lem:local-exclusion}
	For any square $Q \subset \R^2$, any $N \ge 1$
	and $L^2$-normalized $\Psi \in \domD_\alpha^N$ with one-particle density
	$\rho_\Psi$ (defined in \eqref{def:rho}),
	we have
	\begin{equation}\label{eq:local-exclusion}
		T_\alpha^Q[\Psi] \ \ge \ 
		\frac{c(\alpha)}{|Q|} \left( \int_Q \varrho_\Psi(\bx) \,d\bx \ - 1 \right)_+,
	\end{equation}
	where $c(\alpha) := \frac 14 \min\bigl\{ E^\eN_2(\alpha), E^\eN_3(\alpha), E^\eN_4(\alpha) \bigr\}$ satisfies \eqref{eq:calpha}.
\end{lemma}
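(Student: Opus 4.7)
The plan is to follow the same template used in the proof of superadditivity (Lemma~\ref{lem:superadd}), but now applied to the single region $Q$ rather than a full partition of the unit square. For each configuration of the $N$ particles, we classify which ones lie in $Q$ and which do not; the kinetic energy coming from the particles inside $Q$ is then bounded from below by the Neumann $n$-anyon energy on $Q$, where $n$ is the number of particles inside. Concretely, for any $L^2$-normalized $\Psi\in\domD^N_\alpha$, I would first write the localized kinetic energy as
\begin{equation*}
T_\alpha^Q[\Psi] = \sum_{A\subseteq\{1,\ldots,N\}} \sum_{j\in A} \int_{(Q^c)^{N-|A|}} \int_{Q^{|A|}} |D_j\Psi|^2 \,d\sx_A\,d\sx_{A^c},
\end{equation*}
where the outer sum runs over labelings of which particles are in $Q$. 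On each such piece I would perform the gauge transformation $\tilde\Psi := \prod_{j\in A,\,k\in A^c} e^{i\alpha\phi_{jk}}\Psi$; this is well-defined because the square $Q$ is simply connected and disjoint from $Q^c$ on each piece of support, just as in the proofs of Lemmas~\ref{lem:subadd} and~\ref{lem:superadd}.

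After gauging, the $|A|$ particles inside $Q$ see only the Aharonov--Bohm fluxes attached to the other particles inside $Q$, so I can apply the definition of $E^\eN_{|A|}(\alpha;Q)$ to bound their contribution from below while freezing the variables in $Q^c$. Summing over $A$ and using the probabilities $p_n := \binom{N}{n}\int_{(Q^c)^{N-n}\times Q^n}|\Psi|^2$ of finding exactly $n$ particles in $Q$, together with the scaling identity \eqref{eq:scaling}, I would obtain
\begin{equation*}
T_\alpha^Q[\Psi] \ge \sum_{n=0}^N p_n\, E^\eN_n(\alpha;Q) = \frac{1}{|Q|}\sum_{n=0}^N p_n\, E^\eN_n(\alpha).
\end{equation*}

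Now I would invoke the linear lower bound \eqref{eq:new-lower-bound}, namely $E^\eN_n(\alpha)\ge c(\alpha)(n-1)_+$, which holds for all $n\ge 1$ (and trivially for $n=0$) by the combination of Lemmas~\ref{lem:linear} and the a priori positivity from Lemma~\ref{lem:apriori}. This reduces the bound to
\begin{equation*}
T_\alpha^Q[\Psi] \ge \frac{c(\alpha)}{|Q|}\sum_{n=0}^N p_n\,(n-1)_+.
\end{equation*}
Since $x\mapsto(x-1)_+$ is convex and $\sum_n p_n = 1$, Jensen's inequality gives $\sum_n p_n(n-1)_+ \ge \bigl(\sum_n n\,p_n - 1\bigr)_+$. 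The first moment equals $\int_Q \varrho_\Psi$ by definition~\eqref{def:rho} of the one-particle density, which yields the claimed inequality. Finally, the explicit numerical bound \eqref{eq:calpha} on $c(\alpha)$ is inherited from its derivation in the previous subsection, so no further work is required.

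The main obstacle is the gauge step: one has to verify that attaching the phases $e^{i\alpha\phi_{jk}}$ for $j\in A$, $k\in A^c$ genuinely removes the cross-interactions from $D_j$ for $j\in A$ while leaving the modulus $|\tilde\Psi|=|\Psi|$ unchanged, using only the simple connectivity of $Q$ (and not of $Q^c$, which fails). This is essentially the same calculation as in the proofs of Lemmas~\ref{lem:subadd} and~\ref{lem:superadd}, and the argument carries over verbatim since the phase $\phi_{jk}$ needs to be smooth only in the variable $\bx_j\in Q$ with $\bx_k$ treated as a parameter outside $Q$.
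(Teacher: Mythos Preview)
Your proposal is correct and follows precisely the approach the paper indicates, namely the method of \cite[Lemma~8]{LunSol-13a} combined with the new linear bound \eqref{eq:new-lower-bound}; the paper itself does not spell out the details but simply cites that reference. Your observation that the gauge step requires only simple connectivity of $Q$ (so that $\phi_{jk}$ admits a single-valued branch in $\bx_j\in Q$ for each \emph{fixed} $\bx_k\in Q^c$) is exactly the right justification, and the remaining steps---scaling, the linear bound $E^\eN_n(\alpha)\ge c(\alpha)(n-1)_+$, and Jensen---are straightforward.
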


	By applying the method of \cite{LunSol-13a} 
	(see also \cite{Lundholm-17} for a more detailed exposition),
	replacing \cite[Lemma~8]{LunSol-13a} by
	the above bound and using \eqref{eq:calpha} and Lemma~\ref{lem:n2}, 
	one directly obtains the Lieb--Thirring inequality
	of Theorem~\ref{thm:LT} for some universal constant $C>0$.


\end{document}